\newcommand{\edgeplus}{\ensuremath{\langle + \rangle}}
\newcommand{\edgeminus}{\ensuremath{\langle - \rangle}}
\definecolor{edgered}{rgb}{0.4, 0.0, 0.0}
\definecolor{edgeblue}{rgb}{0.01, 0.01, 0.4}
\newcommand{\F}{\mathcal{F}}
\newcommand{\C}{\mathcal{C}}
\DeclareMathOperator{\cost}{cost}
\DeclareMathOperator{\weight}{weight}
\DeclareMathOperator{\opt}{opt}
\newtheorem{redrule}{Reduction rule}
\newtheorem{lemma}{Lemma}
\newtheorem{corollary}{Corollary}
\newtheorem{proposition}{Proposition}
\newtheorem{definition}{Definition}
\newtheorem{theorem}{Theorem}
\tikzset{crossing/.style={cross out, draw=red, minimum size=2*(#1-\pgflinewidth), inner sep=0pt, outer sep=1pt, very thick}, crossing/.default={4pt}}
\newcolumntype{C}[1]{>{\centering\let\newline\\\arraybackslash\hspace{0pt}}m{#1}}
\newcommand{\problem} [1] {\textnormal{\textsc{#1}}}
\newcommand{\cclass} [1] {\textnormal{\textsf{#1}}}
\newcommand{\problembox} [3] {
	\vspace{\dimexpr\parskip+1.3ex}
	\noindent
	\begin{tikzpicture}
		\node[draw=black!40, rounded corners, inner sep=2.3ex] (content) {
			\begin{tabularx} {\dimexpr\columnwidth-5ex} {l X}
				\textbf{Input:} & #2\\
				\textbf{Problem:} & #3\\
			\end{tabularx}
		};

		\node[inner sep=3pt, fill=white, anchor=north west] at ($(content.north west) + (2ex, 1.35ex)$) {\problem{#1}};
	\end{tikzpicture}%
}
\crefname{redrule}{Reduction Rule}{Reduction Rules}
\newcommand{\CCfull}{\problem{Correlation Clustering}}
\newcommand{\CEfull}{\problem{Cluster Editing}}
\newcommand{\CEPVSfull}{\problem{Cluster Editing with Permissive Vertex Splitting}}
\newcommand{\CEPVS}{\CEPVSfull}
\newcommand{\CCVSfull}{\problem{Correlation Clustering with Permissive Vertex Splitting}}
\newcommand{\CCVS}{\CCVSfull}
\newcommand{\MC}{\problem{Multicut}}
\newcommand{\MCVSfull}{\problem{Multicut with Vertex Splitting}}
\newcommand{\MCVS}{\MCVSfull}
\title{Correlation Clustering with Vertex Splitting}
\date{}
\author[1]{Matthias Bentert}
\author[2]{Alex Crane}
\author[1]{Pål Grønås Drange}
\author[3]{Felix Reidl}
\author[2]{Blair D.\ Sullivan}
\affil[1]{University of Bergen, Norway. \texttt{Matthias.Bentert@uib.no}, \texttt{Pal.Drange@uib.no}}
\affil[2]{University of Utah, USA. \texttt{alex.crane@utah.edu}, \texttt{sullivan@cs.utah.edu}}
\affil[3]{Birkbeck, University of London, UK. \texttt{f.reidl@bbk.ac.uk}}
\begin{document}

\maketitle

\begin{abstract}
  We explore \problem{Cluster Editing} and its generalization \problem{Correlation Clustering} with a new operation called \emph{permissive vertex splitting} which addresses finding overlapping clusters in the face of uncertain information. We determine that both problems are \cclass{NP}-hard, yet they exhibit significant differences in terms of parameterized complexity and approximability.
  For \problem{Cluster Editing with Permissive Vertex Splitting}, we show a polynomial kernel when parameterized by the solution size and develop a polynomial-time 7-approximation.
  In the case of \problem{Correlation Clustering}, we establish \cclass{para-NP}-hardness when parameterized by the solution size and demonstrate that computing an~$n^{1-\varepsilon}$-approximation is \cclass{NP}-hard for any constant $\varepsilon > 0$.
  Additionally, we extend an established link between \problem{Correlation Clustering} and \problem{Multicut} to the setting with permissive vertex splits.
\end{abstract}

\section{Introduction}
\label{sec:intro}

Discovering clusters, or communities, is a core task in 
understanding the vast amounts of relational data available. 
One limitation of many traditional clustering algorithms is the necessity of 
specifying a desired number of clusters as part of the input. The problem \problem{Cluster
Editing} avoids this by instead aiming to minimize the number of 
edge insertions and removals necessary to transform the input into a \emph{cluster graph}
(a disjoint union of cliques). This problem has been heavily studied by the 
graph-algorithms community and was first proven to be 
fixed-parameter tractable with respect to the number~$k$ of edge modifications by Cai in 1996~\cite{cai1996fixedparametertractability}.
The running time has significantly improved since, with the best known algorithm running in~$O(1.62^k (n+m))$~time~\cite{bocker2012goldenratio}.
The problem also admits a polynomial kernel with~$2k$~vertices~\cite{chen20122kkernel}. 

Formally in \problem{Cluster Editing}, we consider a complete graph where each edge is labeled as positive (which we imagine as colored \emph{\color{edgeblue}blue}) or negative (colored \emph{\color{edgered}red}) and we ask for the minimum number of edges whose color must be changed so that there is a partition of the vertex set where all edges within each part are blue, and all edges between parts are red.
This convention of an edge-labeled complete graph will be useful in our setting and easily maps onto the more common formalism for \problem{Cluster Editing} with an incomplete, uncolored graph as input (imagine the graph edges as blue and its non-edges as red). We also note that other conventions for labelling positive/negative edges exist in the literature, e.g.\ using labels like $\edgeplus$ and $\edgeminus$.

In practice, the positive or negative association between objects is usually computed using a similarity metric which we can think of as an oracle function which, given two objects, computes a score that expresses their (dis)similarity. For large-scale data, the assumption of complete information is then unrealistic for two reasons: First, the quadratic complexity of computing all pairwise associations is prohibitively expensive. Second, the similarity oracle may be
unable to provide a clear answer for certain pairs---suggesting that objects can either be grouped together or kept separate, depending on 
other parts of the data or even external domain context.

Consequently, the case where the input is an incomplete graph with positive and negative labels on the existing edges and no information about pairs not joined by an edge has been considered.
It was introduced by Demaine et al.~\cite{demaine2006correlation} who allowed ``$0$-weight edges'' (zero-edges) in their cluster-editing framework\footnote{In the version discussed by Demaine et al.~\cite{demaine2006correlation}, real weights are assigned to edges, reflecting the certainty level of the oracle in determining the similarity between objects.  We only consider weights in~$\{-1, 0, 1\}$, a common restriction in the literature.}. 
For clarity, we will refer to the problem where zero-edges (non-edges) are allowed as
\problem{Correlation Clustering} and to the problem where the input graph is
\emph{complete}---i.e. every vertex pair is connected either by a blue or a red edge---as~\problem{Cluster Editing}.

The approximability of both \problem{Cluster Editing} and \problem{Correlation Clustering} are well-studied. First considered by Bansal, Blum, and Chawla~\cite{bansal2004correlation}, under the name \emph{correlation clustering}\footnote{There is significant inconsistency in the literature regarding the nomenclature of these problems; as stated, we reserve the name \problem{Correlation Clustering} for the problem where the input is incomplete.}, \problem{Cluster Editing} admits a~$1.437$-approximation~\cite{cao2024understanding} when minimizing the number of disagreements (red edges within and blue edges between clusters). Other variants of \problem{Cluster Editing} which maximize the number of agreements or the correlation (agreements minus disagreements) admit a PTAS (polynomial-time approximation scheme) and a $\Omega(\log n)$-approximation, respectively~\cite{bansal2004correlation, charikar2004maximizing}. In the more general setting of minimizing disagreements for \problem{Correlation Clustering} (i.e., when zero-edges are present but never constitute a disagreement), an $O(\log n)$-approximation is known~\cite{demaine2006correlation}. This result arises from the strong relation between \problem{Correlation Clustering} and \problem{Multicut}\footnote{Given a set of pairs of terminals, $(s_1,t_1), (s_2,t_2), \dots, (s_p, t_p)$, find a set of at most $k$ edges such that after removing these edges, every pair~$(s_i, t_i)$ is disconnected}. The connection was first observed with \problem{Multiway Cut} by 
Bansal, Blum, and Chawla~\cite{bansal2004correlation}, before an approximation-preserving reduction from \problem{Multicut} to \problem{Correlation Clustering} was given independently by both Charikar, Guruswami, and Wirth~\cite{charikar2005clustering} and Demaine et al.~\cite{demaine2006correlation}. The connection to \MC{} also implies that no constant-factor approximation is possible for \problem{Correlation Clustering} unless the Unique Games Conjecture is false~\cite{chawla2006hardness}.

These algorithmic advances provide a positive outlook on applying these clustering variants in practice.
However, the underlying assumption that real-world data segregates into neat, disjoint clusters is often too optimistic as shown by the following domain examples:

\begin{itemize}
\item Document classification: Individual documents often span multiple topics and should therefore belong to multiple topic-clusters;
\item Sentiment analysis: A single piece of text can express very different emotions (e.g.\ sadness mixed with humor);
\item Community detection: Individuals typically
  participate in multiple communities, such as family, professional, and
  hobbyist groups.
\item Language processing: Homonyms like ``bat'' should belong both to an ``animal'' cluster as well as a ``sports-equipment'' cluster.
\end{itemize}

Hence, the emphasis in clustering has recently shifted towards algorithms for 
\emph{overlapping clustering}~\cite{abu-khzam2018cluster,arora2012findingoverlapping,arrighi2023cluster,askeland2022overlappingcommunity,bandyopadhyay2015focsfast,%
baumes2005efficientidentification,bonchi2013overlappingcorrelation,crane2024overlapping,davis2008clearingthefog,du2008overlapping,galbrun2014overlapping,gil-garcia2010dynamichierarchical,goldberg2010findingoverlapping,gregory2007algorithmtofind,wang2011uncoveringoverlapping,wang2010discoveringoverlapping}.
These models move away from the requirement that data must be partitioned into disjoint subsets by considering a variety of definitions for clusters which may intersect. One natural approach is to edit to a more general target graph class (instead of a cluster graph, consider minimizing the number of edge modifications required to achieve some more complex structure that exhibits strong community structure but allows overlap), but it is difficult to define generalizations that align with many applications.

Motivated by this, Abu-Khzam et al.~\cite{abu-khzam2018cluster} proposed an alternative model for
overlapping clustering based on the concept of \emph{splitting} a vertex into two new vertices,  representing an object having two distinct roles within a dataset. 
This approach led to the problem~\problem{Cluster Editing with Vertex
  Splitting}, where edges can be added or deleted, and vertices
can be split. Here, \emph{splitting} a vertex $v$ means replacing it with two copies
$v_1$ and $v_2$ ensuring the union of their (blue) neighbor sets equals the
original vertex's (blue) neighbor set. 
In fact, Abu-Khzam et al.~\cite{abu-khzam2018cluster} propose two different vertex-splitting operations: one (\emph{exclusive} splitting) where $v_1$ and $v_2$ are required to have disjoint (blue) neighborhoods, and another (\emph{inclusive} splitting) where they are allowed to share (blue) neighbors. See~\cref{fig:splits} for an example. Abu-Khzam et al.~\cite{abu-khzam2023cluster} show that \problem{Cluster Editing with Vertex Splitting} is \cclass{NP}-hard and has a $6k$-vertex kernel, where $k$ is the number of edits (edge modifications/vertex splits) allowed. The approximability of this problem remains unknown. 
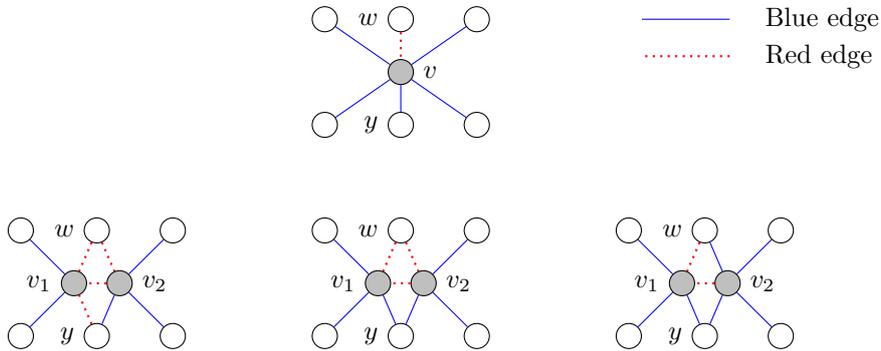
\begin{figure}[t]
    \centering
    \begin{tikzpicture}[yscale=.7]
        \node[circle, draw,label=left:$v_1$,fill=lightgray] at (-4.3,-4) (u1) {};
        \node[circle, draw,label=right:$v_2$,fill=lightgray] at (-3.7,-4) (w1) {} edge[red,dotted,thick](u1);
        \node[circle, draw] at (-5,-3) {} edge[blue](u1);
        \node[circle, draw] at (-5,-5) {} edge[blue](u1);
        \node[circle, draw,label=left:$y$] at (-4,-5) {} edge[blue](w1) edge[red,dotted,thick](u1);
        \node[circle, draw] at (-3,-3) {} edge[blue](w1);
        \node[circle, draw] at (-3,-5) {} edge[blue](w1);
        \node[circle, draw,label=left:$w$] at (-4,-3) {} edge[red,dotted,thick](u1) edge[red,dotted,thick](w1);

        \node[circle, draw,label=right:$v$,fill=lightgray] at (0,0) (v2) {};
        \node[circle, draw] at (-1,1) {} edge[blue](v2);
        \node[circle, draw] at (-1,-1) {} edge[blue](v2);
        \node[circle, draw,label=left:$y$] at (0,-1) {} edge[blue](v2);
        \node[circle, draw] at (1,1) {} edge[blue](v2);
        \node[circle, draw] at (1,-1) {} edge[blue](v2);
        \node[circle, draw,label=left:$w$] at (0,1) {} edge[red,dotted,thick](v2);

        \node[circle, draw,label=left:$v_1$,fill=lightgray] at (-.3,-4) (u2) {};
        \node[circle, draw,label=right:$v_2$,fill=lightgray] at (.3,-4) (w2) {} edge[red,dotted,thick](u2);
        \node[circle, draw] at (-1,-3) {} edge[blue](u2);
        \node[circle, draw] at (-1,-5) {} edge[blue](u2);
        \node[circle, draw,label=left:$y$] at (0,-5) {} edge[blue](w2) edge[blue](u2);
        \node[circle, draw] at (1,-3) {} edge[blue](w2);
        \node[circle, draw] at (1,-5) {} edge[blue](w2);
        \node[circle, draw,label=left:$w$] at (0,-3) {} edge[red,dotted,thick](u2) edge[red,dotted,thick](w2);

        \node[circle, draw,label=left:$v_1$,fill=lightgray] at (3.7,-4) (u3) {};
        \node[circle, draw,label=right:$v_2$,fill=lightgray] at (4.3,-4) (w3) {} edge[red,dotted,thick](u3);
        \node[circle, draw] at (3,-3) {} edge[blue](u3);
        \node[circle, draw] at (3,-5) {} edge[blue](u3);
        \node[circle, draw,label=left:$y$] at (4,-5) {} edge[blue](w3) edge[blue](u3);
        \node[circle, draw] at (5,-3) {} edge[blue](w3);
        \node[circle, draw] at (5,-5) {} edge[blue](w3);
        \node[circle, draw,label=left:$w$] at (4,-3) {} edge[red,dotted,thick](u3) edge[blue](w3);

        \node[circle, draw,color=white,opacity=0] at (3, 1) (legendv1) {};
        \node[circle, draw,color=white,opacity=0] at (4.5, 1) (legendv2) {} edge[blue,label=right:Blue edge](legendv1);
        \node[circle, draw,color=white,opacity=0] at (3, .3) (legendv3) {};
        \node[circle, draw,color=white,opacity=0] at (4.5, .3) (legendv4) {} edge[red,dotted,thick,label=right:Red edge](legendv3);
    \end{tikzpicture}
\caption{A vertex $v$ in an (incomplete) correlation graph (top). The bottom row gives toy examples of
exclusive (left), inclusive (center), and permissive (right) vertex splits of~$v$ into~$v_1$ and~$v_2$. For clarity,
some red edges incident to~$v_1$ and~$v_2$ are omitted from each figure on the bottom row.}
\label{fig:splits}
\end{figure}%

A significant limitation of both existing notions of vertex splitting is that they require red edges to be preserved by both copies of a split vertex.  For example, consider a red edge~$uv$ in data arising from word classification, where~$u$ and~$v$ correspond to ``bat'' and~``cat'', respectively. It could be that the edge was produced by our oracle as a result of ``bat'' being
interpreted as a piece of sports equipment, not an animal.  However, when ``bat'' is split so that each meaning has its own vertex, we wish to retain the red edge only on one of the copies of $v$ (the one \emph{not} corresponding to the small flying mammal, as this does have similarities with a cat). Motivated by this, we introduce a new operation called \emph{permissive
  vertex splitting} which allows replacing a vertex $v$ with two copies~$v_1$ and~$v_2$ with the restriction that if~$uv$
is a blue edge (or red edge, respectively), then at least one of~$uv_1$ and~$uv_2$ is a blue edge
(red edge, respectively). Beyond that, we are free to choose what to do with the newly-created neighborhoods.
We call the new problem variant, where edges can be added or deleted and vertices
can be permissively split, \CCVSfull. We show that sequences of permissive vertex splits solving this problem correspond directly to a natural notion of overlapping clustering (see~\cref{def:overlapping-clustering}), adding to the motivation for this definition of splitting. 

Extending the prior work relating \problem{Correlation Clustering} to \MC{}, we show that \CCVS{} can be reduced to the new problem \MCVSfull{} and vice versa, meaning that the computational complexities of these problems are
essentially the same.  We then show that \MCVS{}, and hence also \CCVS{}, are
\cclass{para-NP}-hard (with respect to solution size), and \cclass{NP}-hard to approximate within an $n^{1 - \epsilon}$
factor for any~$\epsilon > 0$.
Because of the inherent hardness of \CCVS{}, we then turn our attention to the
setting where there are no zero-edges, i.e., to \CEPVSfull{}.  
We show that this problem remains \cclass{NP}-hard, but on the positive side admits a polynomial
kernel (and thus is fixed-parameter tractable).  Finally, we give a polynomial-time algorithm
which provides a 7-approximation for \CEPVS{}.

\section{Preliminaries}\label{sec:prelims}

We refer the reader to the textbook by Diestel~\cite{diestel2012graph} for standard graph-theoretic definitions and notation.
A \emph{star} is a tree with exactly one internal vertex. In particular, a star has at least two leaves.
A \emph{red clique} is a clique in which all edges are red. A \emph{blue clique} is defined similarly. 
For a positive integer $n$, we denote by~$[n] = \{1, 2, \ldots n\}$ the set of all positive integers up to~$n$.
An \emph{incomplete correlation graph} is a simple, unweighted, and undirected graph $G = (V, B, R)$ with two disjoint edge relations $B$ (blue) and $R$ (red). If such a graph is complete, i.e., $B \cup R = \binom{V}{2}$, then we call it a \emph{correlation graph}. For a vertex $v \in V$ we write~$N^R(v)$ to denote the set of neighbors adjacent to~$v$ via red edges (\emph{red neighbors}) and~$N^B(v)$ for those adjacent via blue edges (\emph{blue neighbors}). A \emph{cluster graph} is a correlation graph in which the blue edges form vertex-disjoint cliques (and thus all edges between the cliques are red).
We can now formally define our vertex-splitting operation.

\begin{definition}
    \label{def:permissive-vertex-split}
    A \emph{permissive vertex split} of a vertex $v$ in an (incomplete) correlation graph~$G$ is the replacement of $v$ in $G$ with
    two new vertices~$v_1$ and~$v_2$ such that
    \begin{itemize}
        \item $N^R(v) \subseteq N^R(v_1) \cup N^R(v_2)$, and
        \item $N^B(v) \subseteq N^B(v_1) \cup N^B(v_2)$.
    \end{itemize}
\end{definition}
In other words, we create a new graph where every red (blue) neighbor of~$v$ is a red (blue)
neighbor of at least one of $v_1$ or $v_2$.
All other edges incident to $v_1$ and $v_2$ can be chosen arbitrarily.
In particular, in incomplete correlation graphs, we can assume that all these other edges are neutral (i.e., the ``edges'' do not exist), while in correlation graphs, it is usually simpler to make these edges either red or blue to keep the graph complete.
Notably, the edge~$v_1v_2$ can always be assumed to be a red edge as splitting a vertex into two vertices that end up in the same (blue) connected component is never advantageous.
For the remainder of this text, unless otherwise specified all vertex splits are permissive. Given
a sequence~$\sigma = (\sigma_1, \sigma_2, \ldots, \sigma_k)$ of $k$ vertex splits performed on an (incomplete) correlation graph,
we denote the resulting (incomplete) correlation graph by $G_{|\sigma}$. Each vertex $u$ in $G_{|\sigma}$ corresponds to exactly
one vertex $v$ in $G$. We say that $v$ is $u$'s \emph{ancestor}, and that $u$ is a \emph{descendant} of $v$. If $u = v$, then~$v$ is an \emph{unsplit} vertex. Otherwise we say that $v$ is a \emph{split} vertex and that $u$ is the descendant of a split vertex.

\begin{definition}
    An \emph{erroneous cycle} is a simple cycle that contains exactly one red edge. An (incomplete) correlation graph~$G$ \emph{contains} an erroneous cycle if it contains a subgraph that is an erroneous cycle. A \emph{bad triangle} is an erroneous cycle of length 3.
\end{definition}

Erroneous cycles are the canonical obstruction in \CCfull{}~\cite{charikar2005clustering,demaine2006correlation}, and bad triangles are the canonical obstruction in \CEfull{}. Usually, these problems are formulated as edge editing problems, i.e., delete a minimum number of edges (blue or red) such that the resulting graph has no erroneous cycles/bad triangles.
Previous work on \CEfull{} with (inclusive or exclusive) vertex splitting has allowed both edge edits and vertex splits as editing operations~\cite{abu-khzam2018cluster,abu-khzam2021greedy,abu-khzam2023cluster,arrighi2023cluster}.
However, we note that permissive vertex splitting is flexible enough to capture all editing operations.
First, note that in the setting with blue and red edges, each edge-editing operation can be seen as changing the color of an edge.
Now, consider any solution~$\sigma$ in which the color of an edge~$uv$ is changed.
Then, we construct a new sequence of the same length where this edge edit is replaced by a vertex split.
We choose one endpoint (without loss of generality $v$) and split it into~$v_1$ and~$v_2$.
The neighborhood of~$v_1$ is exactly the neighborhood of the initial vertex~$v$ except that the edge towards~$u$ has the other color.
If the edge~$uv$ was initially red, then the vertex~$v_2$ has all vertices in the graph as red neighbors.
If the edge~$uv$ was blue, then we add blue edges between~$v_2$ and all vertices that end up in the same (blue) connected component as (one descendant of)~$u$ in~$G_{\mid \sigma}$.
The result of the edge edit is now modeled exactly by~$v_1$ and the operation is safe because~$v_2$ cannot participate in any erroneous cycle as it is a twin of (one descendant of)~$u$.
Moving forward, we assume that all editing operations are vertex splits, and we say that a sequence~$\sigma$ of vertex splits \emph{clusters} an (incomplete) correlation graph~$G$ if $G_{|\sigma}$ has no erroneous cycles.
Formally, we study the following problem.

\problembox{\CCVSfull{}}
{An (incomplete) correlation graph $G$ and a non-negative integer $k$.}
{Does there exist a sequence $\sigma$ of at most $k$ vertex splits which clusters $G$?}

\CEPVSfull{} is the same problem restricted to correlation graphs.
We conclude this section with our main structural insight, stating that clustering
an (incomplete) correlation graph $G$ via a sequence of (permissive) vertex splits is equivalent to performing a very natural notion of overlapping clustering on the vertices of $G$.

\begin{definition}
    A \emph{covering} of an (incomplete) correlation graph $G=(V,E)$ is a set family~$\F \subseteq 2^{V}$
    such that~$\bigcup \F = V$. The \emph{cost} of the covering $\F$ is
    \[
        \cost_G(\F) = \sum_{v \in V} (\#\F(v) - 1),
    \]
    where $\#\F(v) := |\{ X \mid v \in X \in \F\}|$ counts the number of sets in $\F$ which contain $v$.
\end{definition}

\begin{definition}
\label{def:overlapping-clustering}
    An \emph{overlapping clustering} of an (incomplete) correlation graph~$G$ is a covering~$\F$ with the following two properties:
    \begin{itemize}
        \item for every blue edge $uv \in B$, there exists at least one cluster~$X \in \F$ with~$\{u,v\} \subseteq X$, and
        \item for every red edge $uv \in R$, there exists two distinct clusters $X,Y \in \F$ with $u \in X$ and~$v \in Y$.
    \end{itemize}
    For a specific edge $uv$, we say that a clustering \emph{covers} the edge if it is blue and the first condition holds and we say that it \emph{resolves} the edge if it is red and the second condition holds.
\end{definition}

\begin{lemma}
	\label{lemma:vs-equivalent-to-oc}
    An (incomplete) correlation graph $G$ can be clustered with $k$ vertex splits
    if and only if $G$ has an overlapping clustering of cost $k$.
\end{lemma}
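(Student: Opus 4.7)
The plan is to prove both directions by explicit construction: on one side, turn the blue connected components of the clustered graph into the cover; on the other side, create one copy of each vertex per cluster containing it.

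For the forward direction, given a sequence $\sigma$ of $k$ splits with $G_{|\sigma}$ free of erroneous cycles, I would define $\F = \{X_C : C \text{ a blue component of } G_{|\sigma}\}$, where $X_C$ is the set of ancestors in $G$ of the vertices in $C$. A straightforward induction on $|\sigma|$ shows that the permissive inclusions $N^B(v) \subseteq N^B(v_1) \cup N^B(v_2)$ and $N^R(v) \subseteq N^R(v_1) \cup N^R(v_2)$ propagate every blue/red edge of $G$ to some descendant pair in $G_{|\sigma}$; a surviving blue edge lies inside a blue component (giving the cover property), and a surviving red edge must straddle two blue components (since otherwise a blue path closes it into an erroneous cycle, giving the resolve property). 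Since each descendant of $v$ lies in exactly one blue component, $\#\F(v) \le \text{desc}(v)$, and summing yields $\cost_G(\F) \le \sum_v(\text{desc}(v) - 1) = k$.

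For the backward direction, given an overlapping clustering $\F$ of cost $k$, I would construct $G_{|\sigma}$ directly on the vertex set $\{v_X : v \in V(G),\, X \in \F,\, v \in X\}$: make each cluster $X$ into a blue clique ($u_X v_X$ blue for all distinct $u, v \in X$), place a red edge $u_X v_Y$ for every red edge $uv \in R$ and every pair $X \ne Y$ with $u \in X, v \in Y$, and fill the remaining pairs with red edges (complete case) or non-edges (incomplete case). The cover property ensures each original blue edge survives between two copies in a common cluster, and the resolve property ensures each original red edge survives between two copies in distinct clusters, so the permissive inclusions hold and this graph is reachable from $G$ via $\sum_v(\#\F(v) - 1) = k$ splits (peeling off one copy of $v$ at a time). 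All red edges of $G_{|\sigma}$ run between distinct clusters, which are distinct blue components, so no erroneous cycle exists.

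The main subtlety is matching the count exactly in the forward direction: if two descendants of the same $v$ land in the same blue component of $G_{|\sigma}$, then $\#\F(v)$ undercounts the splits spent on $v$ and the cost drops below $k$. I would handle this with a monotonicity remark — any feasible parameter $k$ remains feasible when increased (pad the clustering with a singleton cluster, or pad the split sequence with a trivial split that isolates a fresh copy of some vertex) — so the two constructions together equate the minima on both sides, which is the substantive content of the biconditional.
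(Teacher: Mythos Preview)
Your overall strategy matches the paper's: map blue components of $G_{|\sigma}$ to their ancestor sets for the forward direction, and create one descendant per containing cluster for the backward direction. The backward direction and the cost-padding remark are fine.

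There is, however, a genuine gap in the forward direction. You argue that a surviving red edge $u'v'$ in $G_{|\sigma}$ must straddle two distinct blue components $C_1, C_2$ (true, else an erroneous cycle), and conclude the resolve property for the original red edge $uv$. But the resolve property requires two \emph{distinct} clusters $X \neq Y \in \F$ with $u \in X$, $v \in Y$, and nothing prevents the ancestor sets $X_{C_1}$ and $X_{C_2}$ from coinciding. Concretely: take $V = \{a,b\}$ with $ab$ red, split each vertex once so that the blue components of $G_{|\sigma}$ are $\{a_1,b_1\}$ and $\{a_2,b_2\}$; both components have ancestor set $\{a,b\}$, so your $\F = \{\{a,b\}\}$ and the red edge $ab$ is \emph{not} resolved. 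Your construction thus need not produce a valid overlapping clustering at all, so the monotonicity padding does not rescue it.

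The paper patches exactly this hole: if a red edge $uv$ is unresolved, both $u$ and $v$ lie in a single common cluster $X$; tracing the red descendant edge and the blue path inside the corresponding component shows that one endpoint, say $u$, has at least two descendants in $G_{|\sigma}$, hence $\#\F(u) < \mathrm{desc}(u)$. One can therefore add the singleton $\{u\}$ to $\F$, resolving $uv$, while preserving $\cost_G(\F) \le k$. You need this (or an equivalent) repair step.
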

\begin{proof}
    For the first direction, let $\sigma$ be a sequence of $k$ vertex splits
    clustering~${G = (V, B, R)}$, i.e.,~$\sigma$ produces a graph $G_{|\sigma} = (V_{|\sigma}, E_{|\sigma})$ with no erroneous cycles.
    We will
    construct an overlapping clustering $\F$ of cost at most $k$. We note that it is easy to
    extend any such overlapping clustering to one of cost exactly $k$.
    We begin by choosing an
    arbitrary vertex~$v \in V_{|\sigma}$. We denote by $v^*$ the ancestor of $v$ in $V$.
    Let $C_{v} \subseteq V_{|\sigma}$ be the vertices of the
    connected component of $v$ in the subgraph of $G_{|\sigma}$ induced by all blue edges,
    and $C_{v^*}$ be the set of corresponding ancestor vertices in $V$.
    We add $C_{v^*}$ to $\F$ and remove $C_v$ from~$G_{|\sigma}$. We repeat
    this process exhaustively. The resulting $\F$ is a covering of $G$, as each vertex in
    $V$ has at least one descendant in $V_{|\sigma}$. Moreover, our construction guarantees
    that each vertex in~$V_{|\sigma}$ is considered exactly once. Consequently, for each vertex~$v \in V$ we
    have that $\#\F(v)$ is no greater than the number of descendants of $v$ in $V_{|\sigma}$.
    Thus, $\F$ has cost at most~$k$. Each blue edge is covered by construction.

    For the final step, we show
    how to augment~$\F$ such that all red edges are resolved while maintaining that $\cost_G(\F) \leq k$. We begin by
    identifying some red edge $uv$ which is not resolved by $\F$. This implies that each of $u$ and $v$
    are contained in exactly one cluster $X \in \F$. The red edge $uv$ implies that there is some red edge $u_1v_1$
    in $G_{|\sigma}$, where $u_1$ is a descendant of~$u$ and $v_1$ is a descendant of $v$. Moreover, the
    construction of $\F$ guarantees that there is some blue path between $v_1$ and a descendant of $u$, but
    this latter descendant cannot be~$u_1$ or else we have identified an erroneous cycle in $G_{|\sigma}$. Thus,~$u$
    has multiple descendants in~$G_{|\sigma}$ and is therefore a split vertex. Since $u$ is a split vertex but is only contained in one cluster~$X$ in $\F$, we can
    add the cluster~$\{u\}$ to $\F$, thereby resolving $uv$, while maintaining that $\cost_G(\F) \leq k$.
    We repeat this process until all red edges are resolved.

    For the other direction, let $\F$ be an overlapping clustering of $G$ with cost~$k$.
	For each vertex~$v$ that is contained in more than one cluster set in~$\F$, we split~$v$ a total of~${\#\F(v)-1}$~times. We assign each descendant to one set~$X \in \F$ with~$v \in X$ and we create blue edges towards all other vertices that are contained in~$X$ (or to the specific descendant of a vertex in~$X$ that was also assigned to~$X$).
	All other edges incident to the descendant of~$v$ are red.
	We first show that this construction indeed corresponds to a series of vertex splits.
	For each blue edge~$uv$, we have that there is some cluster set~$X \in \F$ with~$u,v \in X$.
	Hence, if~$u$ and/or~$v$ are split, then the blue edge~$uv$ corresponds to the blue edge between the two copies of~$u$ and~$v$ that are assigned to~$X$.
	For each red edge~$uv$, we have that there are some cluster sets~$X \neq Y \in \F$ with~$u \in X$ and~$v \in Y$.
	Hence, if~$u$ and/or~$v$ are split, then the red edge~$uv$ corresponds to the red edge between (the descendant of)~$u$ assigned to~$X$ and (the descendant of)~$v$ that is assigned to~$Y$.
	Moreover, we did exactly~$\cost(\F)$ splits.

	It remains to show that the sequence of splits results in a graph that does not contain any erroneous cycle.
	Assume towards a contradiction that an erroneous cycle~${(u=v_0,v_1,\ldots,v_p=w,u)}$ with red edge~$uw$ remains.
	Note that each vertex is assigned to exactly one cluster set in~$\F$ as each unsplit vertex is contained in exactly one set in~$\F$ and each descendant of a split vertex is assigned to a cluster set by construction.
	We will show that there is no blue edge between vertices that are assigned to different clusters and no red edge between vertices that are assigned to the same cluster set.
	This finishes the proof as~$u$ and~$w$ are then assigned to different cluster sets as they share a red edge, but~$w_i$ and~$w_{i-1}$ are assigned the same cluster set for each~$i \in [p]$, a contradiction.
	First, assume that there is a blue edge~$xy$ where~$x$ and~$y$ are assigned to different cluster sets.
	If~$x$ and~$y$ are both unsplit vertices, then the blue edge between them is not covered by~$\F$, a contradiction.
	Hence, at least one of the two vertices is the descendant of a split vertex and by construction, all edges to vertices that are assigned to different cluster sets are red.
	Now assume that there is a red edge~$xy$ where~$x$ and~$y$ are assigned to the same cluster set~$X \in \F$.
	Again, if~$x$ and~$y$ are both unsplit vertices, then they are only contained in~$X$ in~$\F$ and hence the red edge between them is not resolved by~$\F$, a contradiction.
	So at least one of the two vertices is the descendant of a split vertex and, by construction, all edges to vertices that are assigned to~$X$ are blue, a final contradiction.
	This concludes the proof.
\end{proof}

\section{Incomplete Information}\label{sec:incomplete-information}

We first consider the more general problem \CCVSfull{} which allows for incomplete information.
Without vertex splits, it has long been known that \CCfull{} is in fact
equivalent to \MC{}~\cite{demaine2006correlation}, which is the problem of deleting a minimum number of
edges from a graph~$G = (V, E)$
such that every \emph{terminal pair} of distinct vertices in a set~$S \subseteq \binom{V}{2}$ is separated in the resulting graph. We define \MCVSfull{} and show that it is equivalent to \CCVS{}. We believe that this result is of independent interest, but it will also
prove immediately useful as it facilitates the main results of this section. Specifically, \CCVS{} and
\MCVS{} are both \cclass{para-NP}-hard when parameterized by the number of vertex splits, and for any~$\varepsilon > 0$ it is
\cclass{NP}-hard to approximate either problem within a factor of~$n^{1 - \varepsilon}$.

First, we define our new \MC{} variant. In this context, we use standard graph terminology,
i.e., we discuss simple, unweighted, and undirected graphs with a single edge relation~$E$. Note that this is equivalent to
a correlation graph where edges in $E$ are blue and all other vertex pairs are red, so permissive vertex splits
are still well-defined. However in the \MC{} context, we can safely assume that all vertex splits are
\emph{exclusive}, i.e., whenever splitting a vertex~$v$ into descendants $v_1$ and $v_2$ we have that~${N(v_1) \cup N(v_2) = N(v)}$ and~${N(v_1) \cap N(v_2) = \emptyset}$. The reason is that in \MC{} it is never advantageous to
assign more edges than required.
Note that in the classic version of \MC, it does not make sense to have an edge between two vertices of a terminal pair.
We decided to keep this restriction as it streamlines some of the following arguments.
A related technical detail to discuss is what happens to a terminal pair when one of its two vertices is split.
We work with the variant where the terminal pair is simply removed in this case.
Note that this is equivalent to the variant where we can choose either of the descendants to replace the original vertex in the terminal pair,
since, as previously mentioned, we may always assume that any two descendants of the same vertex end up in different connected components.
The formal definition of \MCVSfull{} is hence as follows.

\problembox{\MCVSfull{}}
{A graph $G = (V, E)$, an integer $k$, and a set $S \subseteq \binom{V}{2}$ of \emph{terminal pairs} with~$S \cap E = \emptyset$.}
{Does there exist a sequence $\sigma$ of at most $k$ (exclusive) vertex splits such that each pair in $S$ is separated in $G_{|\sigma}$?}

We now show that \CCVS{} and \MCVS{} are equivalent problems.
Let $(G = (V, B, R), k)$ be an instance of \CCVS{}. We construct an equivalent instance~${(H = (V', E'), S, k)}$ of \MCVS{}
as follows.
For each vertex $v \in V$ we create a vertex $v'$ in $V'$.
Additionally, for each blue edge $uw \in B$ we add the edge $u'w'$ to $E'$.
Finally, for each red edge $uw \in R$ we add the terminal pair $\{u', w'\}$ to $S$. This completes the construction of~$H$.

\begin{theorem}\label{thm:CCVS-to-MCVS}
    For any integer $k \geq 0$, $(G,k)$ is a yes-instance of \CCVSfull{} if and only if~$(H,S,k)$ is a yes-instance of \MCVSfull.
\end{theorem}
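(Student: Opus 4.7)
The plan is to prove both directions by transferring the sequence of vertex splits between $G$ and $H$ along the canonical correspondence $v \leftrightarrow v'$, preserving the number of splits. For the forward direction, suppose $\sigma$ of size at most $k$ clusters $G$. I would define $\sigma'$ by mirroring: for each split $v \mapsto (v_1, v_2)$ in $\sigma$, perform in $H$ the exclusive split $v' \mapsto (v_1', v_2')$ in which each neighbor $w'$ of $v'$ is assigned to $v_i'$ whenever $w \in N^B(v_i)$, breaking ties arbitrarily to preserve exclusivity. Since $N^B(v) \subseteq N^B(v_1) \cup N^B(v_2)$, every neighbor of $v'$ is covered, so the split is valid, and by construction every edge of $H_{|\sigma'}$ corresponds to a blue edge of $G_{|\sigma}$. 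If some surviving terminal pair $\{u', w'\}$ were connected in $H_{|\sigma'}$, then $u, w$ are unsplit in $\sigma$, so the red edge $uw$ persists in $G_{|\sigma}$, and the connecting path lifts to a blue $u$--$w$ path there, producing an erroneous cycle---contradicting that $\sigma$ clusters $G$.

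For the backward direction, given $\sigma'$ of size at most $k$ separating the surviving terminals of $H$, I would again mirror, distributing blue neighbors in $G$ exactly as $\sigma'$ distributes them in $H$; this makes the blue edges of $G_{|\sigma}$ coincide with the edges of $H_{|\sigma'}$, so the blue components of $G_{|\sigma}$ agree with the connected components of $H_{|\sigma'}$. It remains to place each red edge $uw$ of $G$ on at least one pair of descendants $(u_i, w_j)$ in $G_{|\sigma}$---as required by the coverage condition $N^R(v) \subseteq N^R(v_1) \cup N^R(v_2)$---in such a way that $u_i$ and $w_j$ lie in distinct blue components, so that no erroneous cycle arises.

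The main obstacle is proving that such a pair of descendants always exists. I would invoke the standing assumption from the preliminaries that two descendants of the same vertex can always be taken to end up in distinct (blue) components, and then apply a pigeonhole argument: if $u$ has descendants $u_1', \ldots, u_m'$ in $m$ pairwise distinct components of $H_{|\sigma'}$ and $w$ has $w_1', \ldots, w_n'$ in $n$ pairwise distinct components, and every pair $(u_i', w_j')$ lay in a single common component, then all $u_i'$ would share a component with $w_1'$, contradicting distinctness whenever $\max(m, n) \ge 2$. If $m = n = 1$, neither endpoint is split, and the surviving terminal pair $\{u', w'\}$ is separated by assumption. Assigning each red edge to such a pair of descendants yields a valid permissive-split realization---achievable step-by-step in $\sigma'$'s order by always preserving the red edge along the ancestral lineage leading to the chosen descendants---in which every red edge of $G_{|\sigma}$ crosses between two distinct blue components, ruling out erroneous cycles and completing the proof.
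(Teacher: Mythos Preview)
Your proposal is correct and follows essentially the same approach as the paper: both directions mirror the splits along the correspondence $v\leftrightarrow v'$, with blue edges distributed identically and red edges routed (via a look-ahead to $H_{|\sigma'}$) onto descendant pairs lying in distinct components. Your explicit pigeonhole argument for the existence of such a cross-component pair is actually a bit more careful than the paper's treatment, which relies on the same ``mark the relevant pair of descendants'' idea but does not spell out why the assignment always succeeds.
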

\begin{proof}
    For the first direction, let $\sigma = (\sigma_1, \sigma_2, \ldots)$ be a sequence of vertex splits clustering~$G$. We will construct a sequence $\sigma'$ of the same length which separates each pair in~$S$ by considering each~$\sigma_i$ in order. If~$\sigma_i$ splits vertex~${v \in V}$ into~$v_1$ and~$v_2$
    then $\sigma'_i$ splits~$v'$ into~$v'_1$ and~$v'_2$. By construction, each neighbor~$u'$ of~$v'$ corresponds to a blue
    neighbor $u$ of $v$. If $u$ is a blue neighbor of $v_1$, then we create the edge $v'_1u'$. Otherwise, we create the
    edge $v'_2u'$. This completes the construction of $\sigma'$.
    Now, we assume toward a contradiction that some terminal pair~$\{v', u'\}$ is connected in $H_{|\sigma'}$. Then, there is some path~${(v' = w'_0, w_1, \ldots, w'_p = u')}$ in~$H_{|\sigma'}$. Note that our construction ensures that this path contains at least
    two edges, and that there is a corresponding blue path $(v = w_0, w_1, \ldots, w_p = u)$ in~$G_{|\sigma}$. Moreover, because
    $\{v', u'\}$ is a terminal pair in $H_{|\sigma'}$, $vu$ is a red edge in~$G_{|\sigma}$. Thus, we have
    identified an erroneous cycle~$(v = w_0, w_1, \ldots, w_p = u, v)$ in~$G_{|\sigma}$, contradicting that $\sigma$
    clusters~$G$.

    For the other direction, let $\sigma' = (\sigma'_1, \sigma'_2, \ldots)$ be a sequence of vertex splits such
    that no terminal pair is connected in $H_{|\sigma'}$. As before, we will construct a solution $\sigma$ of the same length
    by considering each $\sigma'_i$ in order. If~$\sigma'_i$ splits~$v'$ into~$v'_1$ and $v'_2$, then we split
    the corresponding vertex~$v$ into $v_1$ and $v_2$ as follows. If $v'$ is a terminal with partner $u'$, then
    our construction guarantees that~$vu$ is a red edge in $G$.
    We create the red edge~$v_1u$ if $v'_1$ (or one of its descendants) is in a different component from~$u$ (or one of its descendants) in~$H_{|\sigma'}$.
    Otherwise, we create the red edge~$v_2u$. We mark the relevant pair of descendants so that, when performing
    subsequent splits, the red edge is always assigned such that its endpoints in $G_{\sigma}$ correspond to vertices in different connected components of $H_{|\sigma'}$. 
    Next, for each $u' \in N(v')$, we create the blue edge~$v_1u$ if $\sigma'_i$ assigns $u'$ to $N(v'_1)$. 
    Otherwise, we create the blue edge~$v_2u$.
    We now assume toward a contradiction that there is an erroneous cycle $(v = w_0, w_1, \ldots, w_p = u, v)$ in
    $G_{|\sigma}$, with $vu$ being the red edge. The blue path $(v = w_0, w_1, \ldots, w_p = u)$ guarantees that there is a
    path~$(v' = w_0', w'_1,\ldots, w_p' = u')$ from $v'$ to $u'$ in~$H_{|\sigma'}$.
    Note that the red edge~$vu$ implies
    that~$\{v', u'\}$is a terminal pair. This contradicts that no terminal pair is connected in~$H_{|\sigma'}.$
\end{proof}

To reduce \MCVS{} to \CCVS{},
we simply reverse the previous reduction of \CCVS{} to~\MCVS{}. Formally,
let~${(G = (V, E), S, k)}$ be an instance of \MCVS. We create an instance~${(H = (V', B, R), k)}$ of
\CCVS{} as follows. For each vertex $v \in V$, we add vertex $v'$ to $V'$, for each edge~$uw \in E$, we add the blue edge $u'w'$ to $B$, and for each terminal pair~$\{u, v\} \in S$, we add
the red edge $u'v'$ to $R$. Note that $B$ and $R$ are disjoint, as by definition no terminal pair in $S$ is also an edge in $E$.

\begin{theorem}\label{thm:MCVS-to-CCVS}
    For any integer $k \geq 0$, $(G, S, k)$ is a yes-instance of~\MCVSfull{} if and only if $(H, k)$ is a yes-instance of \CCVSfull{}.
\end{theorem}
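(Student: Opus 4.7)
The plan is to observe that the construction of $(H, k)$ from $(G, S, k)$ given immediately before the theorem statement is precisely the inverse of the reduction used in \cref{thm:CCVS-to-MCVS}. Concretely, if we take the instance $(H, k)$ just built and run the \CCVSfull{}-to-\MCVSfull{} reduction of \cref{thm:CCVS-to-MCVS} on it, we obtain a graph with a vertex for each $v' \in V'$, an edge for each blue edge in $B$, and a terminal pair for each red edge in $R$. By the construction of $H$, these are in bijection with $V$, $E$, and $S$, respectively, so up to renaming of vertices the resulting \MCVS{} instance is exactly $(G, S, k)$.

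The equivalence then follows by invoking \cref{thm:CCVS-to-MCVS} on $(H, k)$: the instance $(H, k)$ is a yes-instance of \CCVSfull{} if and only if the \MCVS{} instance produced by that theorem's reduction is a yes-instance, and this latter instance is precisely $(G, S, k)$. The only technical point to verify is that $H$ is a well-defined (incomplete) correlation graph, i.e.\ that $B \cap R = \emptyset$; this follows immediately from the \MCVSfull{} input condition $S \cap E = \emptyset$, which guarantees that no pair of vertices is simultaneously an edge of $G$ and a terminal pair.

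No substantive obstacle arises in the proof: the interesting combinatorial work has already been carried out in \cref{thm:CCVS-to-MCVS}, and what remains is essentially bookkeeping to recognize the two reductions as mutual inverses. If a direct, self-contained proof were preferred instead, one would simply mirror the two-direction argument of \cref{thm:CCVS-to-MCVS} with the roles of edges and blue edges (and, correspondingly, of terminal pairs and red edges) interchanged: given a clustering sequence $\sigma$ for $H$, construct $\sigma'$ for $G$ by performing each split on the corresponding vertex and distributing neighbors according to how blue neighbors were distributed in $\sigma$; in the other direction, use an \MCVS{}-separating sequence to guide the assignment of red neighbors during splits in $H$. The contradiction in both directions is again obtained by lifting any would-be erroneous cycle in $H_{|\sigma}$ to a path connecting a terminal pair in $G_{|\sigma'}$, or vice versa.
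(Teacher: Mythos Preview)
Your proposal is correct and follows exactly the same approach as the paper: observe that applying the \cref{thm:CCVS-to-MCVS} reduction to $H$ recovers $(G,S,k)$, so the equivalence follows immediately from \cref{thm:CCVS-to-MCVS}. Your additional remark about $B\cap R=\emptyset$ and the sketched direct argument are fine elaborations, but the paper's proof is even more terse than yours.
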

\begin{proof}
    Note that applying the reduction behind \cref{thm:CCVS-to-MCVS} to~$H$ results in the instance~$(G, S, k)$.
    Thus, \cref{thm:CCVS-to-MCVS} already shows that the two instances are equivalent.
\end{proof}

Theorems \ref{thm:CCVS-to-MCVS} and \ref{thm:MCVS-to-CCVS}, together with the observation that both
reductions exactly preserve the number of vertices, allow us to state the following strong notion of equivalence
between \CCVS{} and \MCVS{}.

\begin{corollary}\label{cor:CCVS-MCVS-equivalent}
    For any function $f$, \CCVS{} admits a kernel of size $f(k)$ if and only if~\MCVS{} does. Furthermore,
    the minimization variant of \CCVS{} admits a polynomial-time $f(n)$-approximation algorithm
    if and only if the minimization variant of \MCVS{} does.
\end{corollary}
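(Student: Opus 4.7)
The plan is to derive the corollary directly from Theorems \ref{thm:CCVS-to-MCVS} and \ref{thm:MCVS-to-CCVS}, exploiting three structural properties of those reductions: (i) each runs in polynomial time, (ii) the vertex count is preserved exactly, since each vertex $v$ of the input produces exactly one vertex $v'$ of the output, and (iii) the parameter $k$ is unchanged, and the proofs are constructive in the sense that a solution of length $\ell$ for one instance is explicitly transformed (in polynomial time, by processing the splits in order) into a solution of length $\ell$ for the other.

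For the kernel direction, I would argue as follows. Assume \MCVS{} admits a kernel of size $f(k)$; the converse is symmetric. Given a \CCVS{} instance $(G, k)$, I would first apply Theorem \ref{thm:CCVS-to-MCVS} to obtain an equivalent \MCVS{} instance $(H, S, k)$ with $|V(H)| = |V(G)|$, then run the assumed kernelization to produce an equivalent \MCVS{} instance $(H', S', k')$ of size at most $f(k') \le f(k)$ (and with $k' \le k$). Applying the reduction of Theorem \ref{thm:MCVS-to-CCVS} to $(H', S', k')$ then yields an equivalent \CCVS{} instance whose vertex count equals $|V(H')|$, giving a kernel of size $f(k)$ for \CCVS.

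For the approximation direction, I would proceed analogously. Assume \MCVS{} admits a polynomial-time $f(n)$-approximation. Given a \CCVS{} instance $G$ on $n$ vertices, Theorem \ref{thm:CCVS-to-MCVS} produces in polynomial time an equivalent \MCVS{} instance $(H, S)$ also on $n$ vertices, with the same optimum value (by the length-preserving correspondence between solutions). Running the approximation yields a split sequence $\sigma'$ for $(H, S)$ of length at most $f(n) \cdot \opt(G)$. Applying the explicit split-by-split procedure from the ``other direction'' of the proof of Theorem \ref{thm:CCVS-to-MCVS} (which is exactly the solution-lifting used in Theorem \ref{thm:MCVS-to-CCVS}) converts $\sigma'$ into a clustering sequence $\sigma$ for $G$ of the same length, giving an $f(n)$-approximation for \CCVS. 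The converse direction is identical, swapping the roles of the two theorems.

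The main obstacle is not any deep argument but a bookkeeping check: verifying that the solution-lifting procedures in the two theorems are genuinely \emph{constructive} and run in polynomial time, and that they really preserve the number of splits (rather than merely preserving existence). Both proofs are written split-by-split and incur constant work per split beyond reading the instance, so this verification is routine, and the corollary follows.
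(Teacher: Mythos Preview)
Your proposal is correct and follows exactly the route the paper intends: the corollary is stated immediately after the sentence ``Theorems~\ref{thm:CCVS-to-MCVS} and~\ref{thm:MCVS-to-CCVS}, together with the observation that both reductions exactly preserve the number of vertices, allow us to state the following strong notion of equivalence,'' and you have simply spelled out that observation in full. The only cosmetic slip is writing the kernel bound as ``$f(k')\le f(k)$'': the kernelization hypothesis gives size $\le f(k)$ directly in terms of the original parameter, so no monotonicity of $f$ is needed.
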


Now that we have established the equivalence of \MCVS{} and \CCVS{}, we are ready to
show the hardness of both problems.

\begin{theorem}\label{thm:MCVS-para-NP-hard}
    \MCVS{} is \cclass{NP}-hard even if~$k=2$.
    Additionally, for any $\varepsilon > 0$ it is
    \cclass{NP}-hard to approximate \MCVS{} to within a factor of~$n^{1-\varepsilon}$.
\end{theorem}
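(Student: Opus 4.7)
The plan is a single reduction from graph coloring that handles both parts. Given a graph $G' = (V', E')$, construct the \MCVS{} instance $(H, S, k)$ where $H$ is the star with center $c$ and leaves $\{\ell_v : v \in V'\}$ (so the only edges are $c\ell_v$) and $S = \{\{\ell_u, \ell_v\} : uv \in E'\}$; note that $S \cap E(H) = \emptyset$, as required. I will prove that the minimum number of vertex splits needed to clusters $(H,S)$ equals $\chi(G') - 1$. Taking $k = 2$ and invoking the NP-hardness of $3$-coloring then yields the first statement. For the inapproximability, the identity together with Zuckerman's theorem that approximating $\chi$ within $n^{1-\varepsilon}$ is NP-hard gives the second: the \MCVS{} instance has $|V'| + 1$ vertices, so the same exponent survives in the size of the produced instance, and any $\alpha$-approximation for \MCVS{} yields an $\alpha$-approximation for $\chi(G')$ via the identity $\mathrm{opt} = \chi(G')-1$.

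For the upper bound, fix a proper coloring of $G'$ with $\chi(G')$ colors and perform $\chi(G') - 1$ splits on $c$ (and its descendants), arranging them so that the final graph contains copies $c_1, \ldots, c_{\chi(G')}$ with each leaf $\ell_v$ attached only to $c_{\text{color}(v)}$. Every edge $uv \in E'$ then connects leaves in distinct connected components, and the corresponding terminal pair is separated.

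For the lower bound, let $\sigma$ be any valid sequence of $k$ splits, let $\ell$ denote the number of splits in $\sigma$ whose ancestor is $c$, and let $T \subseteq V'$ collect those vertices $v$ for which $\ell_v$ has been split at some point. Each $v \in T$ contributes at least one split, so $\ell + |T| \le k$. The graph $H_{|\sigma}$ contains exactly $\ell + 1$ connected components of $c$-type; every leaf whose ancestor lies in $V' \setminus T$ is unsplit and attached to exactly one of these components. Mapping each such $v \in V' \setminus T$ to the index of the component containing $\ell_v$ yields a proper coloring of $G'[V' \setminus T]$, because for every edge $uv \in E'$ with $u, v \in V' \setminus T$ the pair $\{\ell_u, \ell_v\}$ survives in $S$ and must be separated. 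Using the elementary bound $\chi(G') \le \chi(G'[V' \setminus T]) + |T|$, we conclude $\chi(G') \le (\ell + 1) + |T| \le k + 1$.

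The main delicate point is the lower bound: one has to argue that ``wasteful'' operations---splitting a leaf more than once, splitting descendants of an already split leaf, or trading splits of $c$ for splits of leaves to erase troublesome terminal pairs---cannot beat the chromatic bound. The inequality $\chi(G') \le \chi(G'[V' \setminus T]) + |T|$ captures precisely why removing $|T|$ leaves can save at most $|T|$ colors, so every leaf split ``pays for itself'' by at most one unit of chromatic number, and no net gain over the plain coloring strategy is possible.
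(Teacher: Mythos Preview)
Your reduction is exactly the one in the paper: a star with center $c$ and leaves indexed by $V'$, with terminal pairs given by $E'$, and the identity $\mathrm{opt}=\chi(G')-1$. The upper-bound direction and the derivation of both hardness statements from this identity match the paper almost verbatim.

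The one genuine difference is in the lower bound. The paper first normalizes any solution: whenever a leaf (or leaf-descendant) $v$ is split, it replaces that operation by a split of $v$'s unique $c$-type neighbor that isolates $v$, and thereby reduces to the case where \emph{only} descendants of $c$ are ever split; the $k+1$ resulting copies of $c$ then directly give a $(k+1)$-coloring of $G'$. You instead keep the solution as is, set $T=\{v:\ell_v\text{ was split}\}$, observe that the $\ell+1$ descendants of $c$ lie in distinct components (since every non-$c$ vertex has degree at most one), properly $(\ell+1)$-color $G'[V'\setminus T]$ by component, and finish with $\chi(G')\le \chi(G'[V'\setminus T])+|T|\le (\ell+1)+|T|\le k+1$. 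Both arguments are short and correct; yours has the mild advantage of not needing the replacement step (which in the paper silently assumes the leaf being replaced still has a neighbor), while the paper's version is slightly more constructive in that it literally exhibits a coloring from the normalized splits.
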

\begin{proof}
    Let~$G = (V, E)$ be the input graph for $k$-\textsc{Colorability} with $k \geq3$. We will construct an equivalent input instance $(H,S,k-1)$
	for \MCVS. We construct the graph $H$ and terminal set $S$ from $G$ as follows. We add $V$ to $H$ and for each edge $uv \in E$, we add the pair~$\{u, v\}$ to $S$. We then add a new vertex $a$ to $H$, and create edges from $a$ to all other vertices. This completes the construction.

    We first argue that we may assume that any solution of $(H, S, k - 1)$ \emph{only} splits~$a$. To see this, let $\sigma$ be a sequence of vertex splits of length at most~$k-1$ such that all terminal pairs are disconnected in $H_{|\sigma}$. Suppose that some vertex~$v \neq a$ is split. Before this split, $v$ only has one neighbor~$a^*$, which is either equal to $a$ or a descendant of $a$. We simply replace the split of $v$ with a split of $a^*$ into $a^*_1$ and $a^*_2$ such that $N(a^*_1) = \{v\}$ and $N(a^*_2) = N(a^*)\setminus \{v\}$. In the resulting graph, $v$ is disconnected from all other vertices in $V$, and so it is disconnected from all of its terminal partners. We proceed with the assumption that in any solution $a$ is the only split vertex.

    Assume that~$(H,S,k-1)$ is a yes-instance. We show that then~$G$ is $k$-colorable. By the above, $H_{|\sigma}$ contains~$k$ descendants~$a_1,\ldots,a_k$ of~$a$ and these vertices naturally partition the set~$V$ into~$k$ sets~$C_i = N(a_i)$ for~$i \in [k]$. No terminal pair can appear with both endpoints in one of these sets so the same holds for~$E = S$. Hence, $C_1,\ldots,C_k$ is a valid $k$-coloring of~$G$.

    In the other direction, assume that~$G$ has a $k$-coloring with the color partition $C_1,\ldots,C_k$. Then we can split~$a \in H$ a total of~$k-1$ times into descendants~$a_1,\ldots,a_k$ such that~$N(a_i) = C_i$. Since~$\{a_1,\ldots,a_k\}$ is independent it is easy to verify that these~$k-1$ splits separate every terminal pair in~$S$.

    We conclude that \MCVS{} is already \cclass{NP}-hard with parameter~$k = 2$
	as the above provides a reduction from $3$-\textsc{Colorability}. The approximation hardness follows directly from the facts that, given any constant~$\varepsilon>0$, computing an~$n^{1-\varepsilon}$-approximation for \problem{Chromatic Number} is \cclass{NP}-hard~\cite{zuckerman2007linear}, and that our constructed instance of \MCVS{} has only $n+1$ vertices.
\end{proof}
Taken together with~\cref{cor:CCVS-MCVS-equivalent},~\cref{thm:MCVS-para-NP-hard} gives us the same result
for \CCVS{}.
\begin{corollary}\label{cor:CCVS-para-NP-hard}
    \CCVS{} is \cclass{NP}-hard even if~$k=2$.
    Additionally, for any $\varepsilon > 0$ it is
    \cclass{NP}-hard to approximate \CCVS{} to within a factor of~$n^{1-\varepsilon}$.
\end{corollary}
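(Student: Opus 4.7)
The plan is to derive the corollary directly by combining the equivalence between \CCVS{} and \MCVS{} (formalized in~\cref{thm:MCVS-to-CCVS} and~\cref{cor:CCVS-MCVS-equivalent}) with the hardness results for \MCVS{} given by~\cref{thm:MCVS-para-NP-hard}. Essentially, the proof is a two-line argument that transfers hardness across the reduction from \MCVS{} to \CCVS{}.

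For the first statement, I would take an arbitrary instance $(G, S, k)$ of \MCVS{} with $k = 2$ produced by the reduction in~\cref{thm:MCVS-para-NP-hard} (which is \cclass{NP}-hard by that theorem) and feed it into the polynomial-time reduction from \MCVS{} to \CCVS{} described just before~\cref{thm:MCVS-to-CCVS}. Since~\cref{thm:MCVS-to-CCVS} guarantees that the resulting instance $(H, k)$ is equivalent to $(G, S, k)$ and the solution size $k$ is preserved exactly, we conclude that \CCVS{} is already \cclass{NP}-hard for~$k = 2$.

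For the approximation-hardness part, the same reduction also preserves the number of vertices exactly: $H$ has~$|V(G)|$ vertices. Hence a polynomial-time $n^{1-\varepsilon}$-approximation for \CCVS{} would immediately translate (via~\cref{cor:CCVS-MCVS-equivalent}, which formalizes this observation) into an $n^{1-\varepsilon}$-approximation for \MCVS{} on instances with the same number of vertices. By~\cref{thm:MCVS-para-NP-hard} no such algorithm can exist unless $\cclass{P} = \cclass{NP}$, so the analogous hardness transfers to \CCVS{}.

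There is no real obstacle: all the heavy lifting has already been done in~\cref{thm:CCVS-to-MCVS,thm:MCVS-to-CCVS,thm:MCVS-para-NP-hard}. The only thing worth double-checking is that the reduction preserving $k$ also preserves the vertex count (so that the $n^{1-\varepsilon}$ factor is not inflated or shrunk between the two problems); this is explicit in the construction, since each vertex of $G$ yields exactly one vertex in $H$, and each edge or terminal pair yields exactly one blue or red edge. With this in hand, both statements of the corollary follow immediately.
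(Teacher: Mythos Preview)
Your proposal is correct and matches the paper's own argument essentially verbatim: the paper simply states that the corollary follows by combining \cref{cor:CCVS-MCVS-equivalent} with \cref{thm:MCVS-para-NP-hard}, and you have spelled out precisely that transfer (including the preservation of both $k$ and the vertex count).
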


\section{Complete Information}

We now focus our attention on correlation graphs, i.e., we study \CEPVSfull{}. Our main results are
\cclass{NP}-hardness (\cref{sec:np-hardness}), a polynomial kernel (\cref{sec:poly-kernel}),
and a polynomial-time 7-approximation (\cref{sec:approximation}). We begin by introducing a new structure
and subsequent lemmas which will be helpful in attaining the latter two results.

\begin{definition}
    A \emph{bad star}~$S$ in a correlation graph $G$ is a set~$\{v_0,v_1,\ldots,v_{|S|-1}\}$ of vertices where all edges in~$\{\{v_0,v_i\}\mid i \in [|S|-1]\}$ are blue and all edges in~$\{\{v_i,v_j\}\mid i \neq j \in [|S|-1]\}$ are red. The vertex~$v_0$ is called the center and all other vertices are called leaves. The \emph{weight} of a bad star $\weight(S)$ is the number of leaves in the star minus one.
    A \emph{bad star forest} is a collection~$T$ of vertex-disjoint bad stars. We write $\weight(T) := \sum_{S \in T} \weight(S)$ to denote the sum of weights of its members.
    A correlation graph $G$ \emph{contains} a bad star forest if it contains a subgraph which is a bad star forest.
\end{definition}

The first lemma states a useful lower bound in terms of bad stars.

\begin{lemma}\label{lemma:bad-star-lower}
    If $G$ contains a bad star forest of weight~$k$ then we need at least~$k$ vertex splits to cluster~$G$.
\end{lemma}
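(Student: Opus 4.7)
The plan is to prove the bound via the overlapping-clustering characterization already established in \cref{lemma:vs-equivalent-to-oc}. By that lemma it suffices to show that every overlapping clustering $\F$ of $G$ has $\cost_G(\F) \geq k$ whenever $G$ contains a bad star forest $T$ of weight $k$. So I would fix such an $\F$ and bound its cost from below, star by star.

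The key local argument focuses on the centers of the bad stars. Let $S \in T$ be a bad star with center $c$ and leaves $L = \{v_1, \ldots, v_{|L|}\}$. Since every pair of distinct leaves is joined by a red edge, the resolving condition forces them to lie in pairwise distinct clusters of $\F$; in particular, for each $v_i$ we can pick some cluster $X_i \in \F$ containing $v_i$, and these $|L|$ clusters are pairwise distinct. On the other hand, each blue edge $c v_i$ must be covered, so there exists some cluster $Y_i \in \F$ with $\{c, v_i\} \subseteq Y_i$. Because $v_i$ and $v_j$ cannot share a cluster for $i \neq j$, the clusters $Y_1, \ldots, Y_{|L|}$ are also pairwise distinct, all contain $c$, and hence $\#\F(c) \geq |L|$. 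Consequently the contribution of $c$ alone to $\cost_G(\F)$ is at least $|L| - 1 = \weight(S)$.

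Finally, since the bad stars of $T$ are vertex-disjoint, their centers are all distinct vertices of $G$, so the center contributions can be summed without double-counting:
\[
    \cost_G(\F) \;=\; \sum_{v \in V} (\#\F(v) - 1) \;\geq\; \sum_{S \in T}\bigl(\#\F(c_S) - 1\bigr) \;\geq\; \sum_{S \in T} \weight(S) \;=\; k.
\]
Applying \cref{lemma:vs-equivalent-to-oc} in the other direction then yields that any clustering sequence $\sigma$ satisfies $|\sigma| \geq k$, as required.

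I do not expect a real obstacle here: the leaves of a bad star form a red clique and the center is blue-adjacent to all of them, which is precisely the configuration that forces many copies of the center in any overlapping clustering. The only mild care needed is to charge the cost exclusively to the centers (not to the leaves), so that vertex-disjointness of the stars lets us add the local lower bounds cleanly without tracking whether leaves participate in other clusters.
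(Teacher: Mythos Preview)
There is a genuine gap in your argument. The sentence ``the resolving condition forces [the leaves] to lie in pairwise distinct clusters of $\F$'' misreads \cref{def:overlapping-clustering}: resolving a red edge $uv$ only requires the existence of \emph{some} pair of distinct clusters with $u$ in one and $v$ in the other; it does \emph{not} forbid $u$ and $v$ from sitting together in a common cluster. Consequently your key step ``$v_i$ and $v_j$ cannot share a cluster, hence $Y_1,\dots,Y_{|L|}$ are pairwise distinct, hence $\#\F(c)\ge |L|$'' is invalid. A concrete counterexample already appears for a bad star with center $c$ and two leaves $v_1,v_2$: the overlapping clustering $\F=\{\,\{c,v_1,v_2\},\,\{v_1\}\,\}$ covers both blue edges $cv_1,cv_2$ via the first cluster and resolves the red edge $v_1v_2$ (take $v_1\in\{v_1\}$ and $v_2\in\{c,v_1,v_2\}$). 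Here $\#\F(c)=1<2=|L|$, so the entire cost of $1=\weight(S)$ is paid by the leaf $v_1$, not by the center.

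This means you cannot charge the lower bound exclusively to the centers; in general the cost is split between the center and the leaves in a way that depends on $\F$. The paper handles this by first reducing to the bad star forest itself as a subgraph (showing that restricting an overlapping clustering to a subgraph does not increase cost), and then, for a single bad star with center $x$, normalising $\F$ so that the clusters $C_1,\dots,C_p$ containing $x$ partition the leaf set into blocks $\hat C_i=C_i\setminus\{x\}$. Within each block the red clique forces at least $|\hat C_i|-1$ leaves to appear in a second cluster, and summing gives $\sum_i(|\hat C_i|-1)+(p-1)=|L|-1=\weight(S)$. The vertex-disjointness of the stars is then used, as in your outline, to add these local bounds. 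Your overall strategy (pass to overlapping clusterings, bound cost star by star, sum using disjointness) is sound; what is missing is precisely this mixed center/leaf accounting.
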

\begin{proof}
    We begin by showing that if $G = (V, B, R)$ contains a (not necessarily induced) subgraph $H = (V_H, B_H, R_H)$
    and at least $k$ vertex splits are needed to cluster $H$, then at least $k$
    vertex splits are needed to cluster~$G$. Suppose otherwise. Then, using~\cref{lemma:vs-equivalent-to-oc}, there is some overlapping clustering $\F$ of~$G$
    with cost less than $k$.
    We will construct an overlapping clustering $\F_H$ of $H$ with cost less than~$k$.
    We begin by setting~$\F_H = \{X \cap V_H \ | \ X \in \F\}$. It is clear that this is a covering of
    $H$, that every blue edge is covered, and that~$\#\F_H(v) \leq \#\F(v)$ for every vertex $v \in V_H$.
    We now ensure that each red edge is resolved. Let~$uv$ be a red edge
    which is not resolved, so each of $u$ and $v$ belong to only a single cluster~$X \in \F_H$. In this case,
    we claim that~$\F$ contains two distinct clusters~$Y \neq Z$ such that~$Y \cap V_H = Z \cap V_H = X$. Otherwise,
    either $\F$ does not resolve~$uv$ or one of~$u$ or~$v$ is contained in multiple clusters of
    $\F_H$, both contradictions. Thus, we can safely add the cluster $\{u\}$ (chosen without loss of generality) to $\F_H$,
    thereby resolving $uv$ while maintaining that~$\#\F_H(u) \leq \#\F(u)$.
    We repeat this process iteratively until all red edges are resolved. In doing so, we produce an
    overlapping clustering $\F_H$ of $H$ with~cost$_H(\F_H) \leq$ cost$_G(\F) < k$, a contradiction.

    It remains to show that a bad star forest
    of weight $k$ requires at least $k$ vertex splits to cluster. We begin by showing that
    a bad star $S$ of weight $k$ requires at least $k$ vertex splits.
    Let~$x$ be the center vertex of~$S$ and let~$\F$ be an overlapping clustering of $S$.
    Let the clusters in~$\F$ which contain~$x$ be~${C_1, C_2, \ldots, C_p}$.
    Moreover for each~$1 \leq i \leq p$,~let $\hat{C}_i = C_i \setminus \{x\}$.
    Note that we may assume each $\hat{C}_i$ is nonempty, as the cluster $\{x\}$ covers no blue edges and resolves
    no red edges in~$S$, and can therefore be safely removed from $\F$.
    Observe also that each leaf~$v$ of~$S$ must be contained in some set~$\hat{C}_i$ since the edge~$xv$ is blue.
    Now suppose that some leaf~$v$ is contained in two sets~$\hat{C}_i \neq \hat{C}_j$.
    We remove $v$ from the cluster~$C_j$ (chosen arbitrarily) and add the cluster $\{v\}$ to $\F$. The blue
    edge~$xv$ is still covered by~$C_i$, the cluster~$\{v\}$ ensures that all red edges incident to~$v$
    are still resolved, and we have not increased the cost of the clustering. Thus, we may safely
    assume that the sets $\hat{C}_1, \hat{C}_2, \ldots \hat{C}_p$ are a partition of the leaves of $S$. Consider
    one such set $\hat{C}_i$. These leaves induce a red clique and none of these red edges is resolved by~$C_i$, so we have that at least $|\hat{C_i}| - 1$ of these leaves are contained in multiple clusters in~$\F$.
    Since we also know that~$\#\F(x) = p$, we conclude
    \begin{align*}
    	\cost_S(\F) &\geq (|\hat{C}_1|-1) + (|\hat{C}_2|-1) + \ldots + (|\hat{C_p}|-1) + \#\F(x) - 1 \\
    	&= |\hat{C}_1| + |\hat{C}_2| + \cdots + |\hat{C}_p| - p + p - 1 = |S \setminus \{x\}| - 1 = \weight(S) = k
    \end{align*}
    Finally, let
    $T$ be a bad star forest made up of $t$ bad stars $S_1, S_2, \ldots, S_t$. Let~$k$ be the weight of
    $T$ and suppose toward a contradiction that $T$ admits an overlapping clustering~$\F$ of cost less than $k$. Then,
    we repeat the technique from earlier in this proof to construct overlapping clusterings $\F_{S_1}, \F_{S_2}, \ldots \F_{S_t}$ of the bad stars. Because
    the bad stars are vertex-disjoint, we have that ${\text{cost}_{S_1}(\F_{S_1}) + \text{cost}_{S_2}(\F_{S_2}) + \ldots + \text{cost}_{S_t}(\F_{S_t}) \leq \text{cost}_{T}(\F) < k}$.
    This implies that there is some~$S_i$ such that $\text{cost}_{S_i}(\F_{S_i})$ is less than the weight of $S_i$,
    but we have already proven that this is impossible.
\end{proof}

The second lemma states that every optimal solution contains a cluster that contains all vertices of a sufficiently large blue clique.

\begin{lemma}\label{lemma:big-cluster2}
    If a correlation graph~$G$ contains a blue clique~$C$ of size at least~$k+1$, then any overlapping clustering $\F$ of cost at most $k$ contains a set~$X \in \F$ with~$C \subseteq X$.
\end{lemma}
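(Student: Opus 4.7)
The plan is to argue by contradiction: suppose $\F$ is an overlapping clustering of $G$ with cost at most $k$ such that no single $X \in \F$ satisfies $C \subseteq X$, and derive $\cost_G(\F) \geq k+1$.

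The key observation is that under this assumption, every vertex of $C$ must lie in at least two distinct clusters of $\F$. Indeed, fix $v \in C$ and suppose $\#\F(v) = 1$, so $v$ is contained in a unique cluster $X \in \F$. For every other $u \in C$, the edge $uv$ is blue and must therefore be covered by some cluster containing both $u$ and $v$; since $X$ is the only cluster containing $v$, we must have $u \in X$. Applying this to every $u \in C \setminus \{v\}$ gives $C \subseteq X$, contradicting our assumption. Hence $\#\F(v) \geq 2$ for every $v \in C$.

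With this established, the cost bound is immediate:
\[
    \cost_G(\F) \;=\; \sum_{w \in V}(\#\F(w) - 1) \;\geq\; \sum_{v \in C}(\#\F(v) - 1) \;\geq\; |C| \;\geq\; k+1,
\]
which contradicts $\cost_G(\F) \leq k$ and finishes the proof.

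I do not anticipate a significant obstacle here: the argument is a direct one-step deduction from \cref{def:overlapping-clustering} (each blue edge is covered by a common cluster) and the definition of $\cost_G$. The only subtle point to flag carefully in the writeup is that the coverage condition forces any single cluster containing $v \in C$ to swallow all of $C$ whenever $\#\F(v) = 1$, which is exactly what rules out the case $\#\F(v) = 1$.
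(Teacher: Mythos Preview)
Your proof is correct and follows essentially the same approach as the paper: both argue by contradiction and use the covering condition for blue edges to show that a vertex of $C$ lying in a single cluster forces that cluster to contain all of $C$. The only cosmetic difference is that the paper first invokes the cost bound to obtain such a vertex and then derives the contradiction directly, whereas you first show every vertex of $C$ lies in at least two clusters and then sum to exceed the cost bound; the core idea is identical.
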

\begin{proof}
    Let~$C$ be a blue clique in~$G$ of size at least~$k+1$ and let $\F$ be any overlapping clustering of cost at most~$k$.
    Assume towards a contradiction that~$\F$ does not contain a set~$X$ with~$C \subseteq X$.
    Since~$\F$ has cost at most~$k$, there exists a vertex~$v \in C$ that is contained in exactly one set~$Y \in \F$.
    Moreover, since $Y$ does not contain all vertices of~$C$ by assumption, there exists a vertex~$u \in C \setminus Y$.
    Since~$C$ is a blue clique, the edge~$uv$ is blue and is covered by some set~$Z \in \F$.
    Observe that~$Y \neq Z$ as~$u \in Z$ and~$u \notin Y$.
    Since~$Z$ covers the edge~$uv$, it holds that~$v \in Z$, a contradiction to the assumption that~$v$ is only contained in~$Y$.
\end{proof}

\subsection{NP-hardness}\label{sec:np-hardness}

We now show that \CEPVSfull{} is \cclass{NP}-hard.

\begin{proposition}
	Deciding whether a given correlation graph admits an overlapping clustering of cost at most~$k$ is \cclass{NP}-hard.
\end{proposition}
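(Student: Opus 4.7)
My plan is to establish \cclass{NP}-hardness via a polynomial-time reduction from a known \cclass{NP}-hard problem, exploiting the two structural lemmas just proved. A natural source candidate is standard \problem{Cluster Editing} on complete graphs (which is \cclass{NP}-hard, and whose edits embed into split sequences by the simulation in \cref{sec:prelims}); alternatively one can reduce from a coloring or SAT variant in the spirit of the \MCVS{} reduction of \cref{thm:MCVS-para-NP-hard}.

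Given an instance $I$ of the source problem, I would construct a correlation graph $H$ together with a budget $k'$ such that $I$ is a yes-instance if and only if $H$ admits an overlapping clustering of cost at most $k'$. The central gadget is a \emph{blue anchor clique}: to each ``element'' of $I$ (e.g., each vertex of the source graph) I would attach a private blue clique of size strictly greater than~$k'$. By~\cref{lemma:big-cluster2}, in any overlapping clustering of~$H$ of cost at most~$k'$ each such anchor must be entirely contained in a single cluster, which pins every element to one principal cluster and rigidifies the high-level structure of any cheap clustering. The inter-anchor colors are then chosen to encode the problem's constraints, so that a violation of the encoded constraint materialises as a bad star forest of weight greater than~$k'$; \cref{lemma:bad-star-lower} then rules out any low-cost clustering for no-instances.

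The forward direction (yes-instance~$\Rightarrow$ cheap clustering) is the easier one: given the certificate of~$I$, one groups anchors together into the intended clusters, overlapping exactly as the certificate dictates, and a direct count shows that the cost matches~$k'$. The reverse direction carries the real weight: starting from an overlapping clustering of~$H$ of cost at most~$k'$, \cref{lemma:big-cluster2} forces the anchor structure, while counting residual splits against the bad star forest bound of \cref{lemma:bad-star-lower} extracts a valid solution of~$I$.

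The main obstacle I anticipate is ruling out that permissive splitting gives a shortcut below the intended cost. Since a single split is strictly more flexible than an edge edit, the reduction must guarantee that no clever splitting pattern can ``cheat'' the gadget: the anchor cliques must be chosen large enough (relative to~$k'$) that cutting into any of them immediately overshoots the budget via \cref{lemma:big-cluster2}, and the inter-anchor structure must be arranged so that the bad star forests witnessing no-instances have weight exactly matching~$k'+1$ rather than some smaller quantity. Calibrating the anchor size and the budget so that the forward and reverse directions close up without slack is where I expect the bulk of the technical work to sit.
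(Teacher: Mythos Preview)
Your instinct to anchor the construction with a large blue clique and invoke \cref{lemma:big-cluster2} is exactly right, but the proposal never commits to a concrete source problem or construction, and the two sources you float both create avoidable work. Reducing from \problem{Cluster Editing} is delicate precisely because a single permissive split can be strictly cheaper than an edge edit (it can repair many bad triangles at once), so you would have to engineer the instance to prevent this---which is the ``main obstacle'' you yourself identify and leave open. The per-element private anchors plus bad-star-forest accounting you sketch might be made to work, but it is substantially more machinery than needed, and \cref{lemma:bad-star-lower} is not actually required here at all.

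The paper's proof is much lighter: it reduces from \problem{Vertex Cover}. Given $(G=(V,E),k')$, it builds $H$ on $U\cup V$ where $U$ is a single fresh set of $k'+1$ vertices; every pair is blue except that each edge $uv\in E$ becomes red, and $k=k'$. The forward direction takes a vertex cover $S$ and uses the clustering $\{U\cup V\}\cup\{\{v\}:v\in S\}$. For the reverse direction, since $U\cup\{v\}$ is a blue clique of size $k+2$ for every $v\in V$, \cref{lemma:big-cluster2} forces each such clique into a common cluster, and a cost-$k$ bound forces all these clusters to coincide; hence some $X\in\F$ contains all of $U\cup V$, and the at most $k$ vertices appearing in any other cluster must hit every red edge, i.e., form a vertex cover. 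No bad-star counting, no per-vertex anchors, no calibration---just one global blue clique and \cref{lemma:big-cluster2}.
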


\begin{proof}
	We reduce from \textsc{Vertex Cover}.
	Let~$(G=(V,E),k')$ be an instance of \textsc{Vertex Cover}.
	We construct a correlation graph~$H$ as follows.
	Let~$U$ be a set of~$k'+1$ vertices (not contained in~$V$).
	The vertex set of~$H$ is~$U \cup V$.
	For each edge~$e=uv \in E$, we add a red edge~$uv$ to~$H$.
	All other edges (including all edges incident to a vertex in~$U$) are blue.
	Finally, we set~$k = k'$.

	We next show that the reduction is correct.
	First assume that~$G$ contains a vertex cover~$S$ of size at most~$k$.
	We construct an overlapping clustering~$\F$ of~$H$ of cost at most~$k$ as follows.
	The family~$\F$ contains one set~$X = U \cup V$ and for each vertex~$v \in S$, it contains a set~$X_v = \{v\}$.
	Note that the cost of~$\F$ is at most~$k$ and all blue edges in~$H$ are covered by~$X$.
	Moreover, each red edge~$uw$ in~$H$ is resolved as by construction it holds that~$\F$ contains the set~$X_u = \{u\}$ or~$X_w = \{w\}$.
	Without loss of generality, let~$\F$ contain~$X_u$.
	Then, the red edge~$uw$ is resolved as~$w$ is contained in~$X$ and~$u$ is contained in~$X_u \neq X$.
	
	For the other direction, assume that there is an overlapping clustering~$\F$ of~$H$ of cost at most~$k$.
	By \cref{lemma:big-cluster2}, for each vertex~${v \in V}$, there exists a set~$X_v \in \F$ with~$U \cup \{v\} \subseteq X_v$.
	Note that~$U \subset X_u \cap X_v$ for any pair~$u,v \in V$ and therefore~$X_u = X_v$ as otherwise the cost of~$\F$ is at least~$|U| > k$.
	Hence, there exists a set~$X \in \F$ with~$U \cup V \subseteq X$.
	Since all blue edges are covered by~$X$, we next focus on resolving all red edges.
	Note that since~$X$ contains all vertices in~$H$ and the cost of~$\F$ is at most~$k$, all remaining sets in~$\F' = \F \setminus \{X\}$ contain at most~$k$ vertices combined.
	If for some red edge~$uv$ none of the two vertices~$u$ or~$v$ is contained in a set in~$\F'$, then this red edge is not resolved by~$\F$.
	Thus for each red edge, at least one of the two endpoints is contained in a set in~$\F'$.
	Note that this immediately implies that~$G$ contains a vertex cover of size at most~$k$ (all vertices that are contained in a set in~$\F'$).
	This concludes the proof.
\end{proof}

\subsection{Polynomial Kernel}\label{sec:poly-kernel}

We next show that \CEPVSfull{} parameterized by~$k$ admits a polynomial kernel.
Note that this is in stark contrast to the para-\cclass{NP}-hardness of \CCVSfull{} parameterized by~$k$.\looseness-1

\begin{theorem}
    \CEPVSfull{} parameterized by the number of vertex splits admits a kernel with~$O(k^3)$~vertices.
\end{theorem}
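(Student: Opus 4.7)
The overall strategy is to find a small combinatorial certificate of obstructions and then reduce all material lying outside it, leaning on the two structural lemmas already established. The first step is to greedily compute a maximal (under inclusion) bad star forest $T$ in $G$: repeatedly add any bad star vertex-disjoint from what is already chosen. If $\weight(T) > k$ then \cref{lemma:bad-star-lower} certifies a no-instance. Otherwise $\weight(T) \le k$, and since every $S \in T$ has $|V(S)| = \weight(S) + 2$ with $\weight(S) \ge 1$, we obtain $|V(T)| \le 3k = O(k)$.

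Next I would exploit maximality to describe the structure of $G[V \setminus V(T)]$: any three pairwise distinct vertices in $V \setminus V(T)$ forming a bad triangle would be a disjoint bad star of weight~$1$, contradicting maximality. In a complete correlation graph, the absence of bad triangles forces the blue relation to be transitive, so $V \setminus V(T)$ partitions into blue cliques $C_1, \ldots, C_m$ with every inter-clique edge red. The remaining task is therefore to cap both $\max_i |C_i|$ and $m$ by $O(k^2)$, after which the kernel has at most $|V(T)| + m \cdot \max_i |C_i| = O(k) + O(k^2) \cdot O(k) = O(k^3)$ vertices.

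To bound $|C_i|$, I would combine \cref{lemma:big-cluster2} with a twin-removal rule: any $C_i$ of size at least $k+2$ is forced inside a single cluster of every cost-$\le k$ overlapping clustering, so two vertices in $C_i$ exhibiting the same pattern of blue/red edges to $V(T)$ are fully interchangeable, and we may retain only $O(k)$ of them per distinct signature in $\{B,R\}^{V(T)}$; coupled with a bound on which signatures can meaningfully differ given the budget, this yields $|C_i| = O(k^2)$. A coarser twin-at-the-clique-level argument then bounds $m$: outer cliques with the same blue signature toward $V(T)$ can be consolidated, since any split of a vertex of $V(T)$ treats them identically, so only $O(k^2)$ distinct ``live'' cliques can survive.

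The main obstacle will be establishing the safety of these twin reductions in the presence of \emph{permissive} splits, which enjoy substantial freedom in distributing edges across the two descendants of a split vertex. The usual lift-a-solution-back argument -- copy the role of a surviving twin onto the deleted one -- must be re-justified via the correspondence in \cref{lemma:vs-equivalent-to-oc} so that the lift preserves $\cost_G(\F) \le k$. At the clique level the analogous difficulty is compounded because a single split of a star center in $V(T)$ can simultaneously affect many outer cliques, so the charging argument must avoid double-counting the savings obtained by identifying twin cliques. Once this safety is in place, summing yields the claimed $O(k^3)$-vertex kernel.
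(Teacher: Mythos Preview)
Your skeleton is correct and matches the paper: compute an inclusion-maximal bad star forest, conclude that its complement is a cluster graph, then bound the number and size of the residual cliques. The gap is in how you propose to realize those two bounds.

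Your reduction for $|C_i|$ hinges on twin removal by signature in $\{B,R\}^{V(T)}$, but $|V(T)|$ can be as large as $3k$, so there are $2^{3k}$ potential signatures. You explicitly defer the step ``a bound on which signatures can meaningfully differ given the budget,'' and that is precisely the missing idea: nothing you have written yields a polynomial bound on the number of occupied signature classes, so the argument as stated gives only $|C_i| \le O(k)\cdot 2^{3k}$. The same issue bites your bound on $m$: ``cliques with the same blue signature toward $V(T)$ can be consolidated'' again leaves $2^{3k}$ classes, and in addition it is not clear what ``consolidate'' means operationally (the two cliques are joined by red edges, so you cannot merge them into one blue clique; and simply deleting one of them risks losing blue-edge constraints between $S$ and the deleted clique that were not witnessed by the survivor when their sizes differ).

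The paper sidesteps signatures entirely. For $|C_i|$ it uses a \emph{marking} rule: for each $v\in S$ keep $k{+}1$ blue and $k{+}1$ red neighbours of $v$ inside $C_i$, then delete the rest. Soundness comes from \cref{lemma:big-cluster2}: any deleted $u$ can be reinserted into the unique cluster $\hat C\supseteq C_i$, and the $k{+}1$ retained witnesses force each blue neighbour of $u$ in $S$ into $\hat C$ and each red neighbour of $u$ into some other cluster. This gives $|C_i|\le |S|\cdot 2(k{+}1)=O(k^2)$ without ever counting signature classes. For $m$ the paper does not reduce at all: after deleting isolated blue cliques, every surviving clique sends a blue edge into $S$, and if $m\ge 4k{+}1$ these edges assemble into a bad star forest of weight $\ge k{+}1$, so one may simply reject. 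Both steps are short and avoid the exponential-signature obstacle your outline runs into.
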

\begin{proof}
    Let $(G = (V, B, R),k)$ be the input instance of \CEPVS{}.
    We begin by computing an inclusion-maximal bad star forest $T$ in $G$. If $\weight(T) \geq k$, then we conclude, according to Lemma~\ref{lemma:bad-star-lower}, that $(G,k)$ is a no-instance and output an appropriate trivial kernel.

    Otherwise let $S$ be the vertices of $T$ and note that $|S| \leq 3\weight(T) \leq 3k$. Since $T$ is inclusion-maximal, we know that $G \setminus S$ cannot contain any bad stars and in particular no bad triangles. We conclude that $G \setminus S$ is therefore a cluster graph. Let
    $C_1, C_2, \ldots, C_p$ be these clusters.
    We next exhaustively apply the following simple reduction rule.
    \begin{redrule}\label{red:remove-clusters}
        If $G$ contains a blue clique~$C$ such that
        all edges with one endpoint in~$C$ are red, then remove~$C$
        from $G$.
    \end{redrule}

    Next, we bound the number~$p$ of cliques in~$G\setminus S$ as follows.
    Assume that~$G\setminus S$ contains at least~$4k +1$ cliques.
    Note that by application of \cref{red:remove-clusters}, all clusters in $G\setminus S$ have at least one blue edge towards $S$.
    Pick for each clique~$C$ in~$G\setminus S$ one such blue edge towards~$S$ and let~$v_C$ be the endpoint in~$C$ of this edge.
    Note that these chosen edges form a collection of vertex-disjoint stars with all centers in~$S$ (but not necessarily all vertices in~$S$ being centers).
    Moreover, since the vertices~$v_C$ and~$v_{C'}$ belong to different cliques for each pair~$C \neq C'$ of cliques in~$G\setminus S$, the edge between the two is red.
    Hence, all stars with at least two leaves in~$V \setminus S$ are bad stars.
    Let~$S' \subseteq S$ be the set of vertices in~$S$ that are not the center of such bad stars, that is, vertices in~$S$ for which we chose at most one incident blue edge as a representative for a clique.
    Let~$S^* = S \setminus S'$.
    Since we chose at most one blue edge~$sv_C$ for each vertex~$s \in S'$, the number of chosen blue edges included in bad stars is at least
    \begin{equation*}
    	(4k+1) - |S'| \geq (4k+1) - |S| + |S^*| \geq (4k+1) - 3k + |S^*| = k+1 + |S^*|.
    \end{equation*}
    Hence, the weight of the constructed collection of bad stars is at least~$k+1$ and by \cref{lemma:bad-star-lower}, we conclude that~$(G,k)$ is a no-instance.
    Thus, if~$G \setminus S$ contains at least~${4k+1}$~cliques after applying \cref{red:remove-clusters} exhaustively, we can return a trivial no-instance.
    Otherwise, the number~$p$ of cliques is bounded by~$4k$.

    We are now left with the task of bounding the size of each individual cluster~$C_i$ to arrive at a polynomial kernel.
    To this end, we apply the following marking and deletion procedure to each cluster:
    For a fixed cluster~$C_i$, begin with an initially empty set $M_i$. For each vertex $v \in S$, arbitrarily mark~$k+1$~red and~$k+1$~blue neighbors of~$v$ in~$C_i$ by adding them to~$M_i$ (or all red/blue neighbors if there are at most~$k$).
    Note that we mark at most~$|M_i| \leq |S|(2k+2) \leq 6k^2+6k$ vertices this way.

    \begin{redrule}\label{red:shrink-clusters}
        For each cluster $C_i$, delete all (unmarked) vertices in~${C_i \setminus M_i}$ from~$G$.
    \end{redrule}

    Let $\hat G$ be the graph obtained after applying the reduction rule to some cluster~$C_i$.
    We now need to show that this reduction rule is safe and sound.
    Let~$R_i := C_i \setminus M_i$ be the vertices removed by the reduction rule.

    First note that if $\F$ is an overlapping clustering of $G$, then $\F \setminus R_i$ (interpreted as a multiset\footnote{Technically an overlapping clustering cannot be a multiset. We refer the reader to the proof of~\cref{lemma:bad-star-lower}, in which we formally show how to adapt this construction into an overlapping clustering with cost bounded by $\cost_G(\F)$.}, that is, the same cluster might appear multiple times in it) is trivially an overlapping clustering of $\hat G$ and
    $
        \cost_{\hat G}(\F \setminus R_i) \leq \cost_{G}(\F)
    $.
    Thus, the reduction rule is safe.

    To prove that the reduction rule is sound, let $\hat \F$ be an overlapping clustering of $\hat G$ with~${\cost_{\hat G}(\hat \F) \leq k}$.
    Let~$u \in R_i$ be one of the removed vertices.
    We argue that we can include $u$ in the clustering without increasing the cost.
    Note that since we removed a vertex, the size of~$C_i$ was initially at least~$2k+2$ and hence, by \cref{lemma:big-cluster2}, we have that~$\hat \F$ contains a set $\hat C$ with $C_i \subseteq \hat C$.
    We add~$u$ to this cluster and now argue that $u$ does not have to be included in any further clusters if $(G,k)$ is a yes-instance.
    To this end, we show that every edge incident to $u$ is already covered/resolved by this new clustering.

    Let~$uv$ be any blue edge incident to~$u$.
    Note that if~$v \in C_i$ then $uv$ is covered by $\hat C$, so we may assume that~$v\in S$.
    Then~$v$ has at least~$k+2$~blue neighbors in~$C_i$ as otherwise we would have marked~$u$.
    Let~$N$ be a set of~$k+1$~neighbors of~$v$ in~$C_i$ that were marked.
    By \cref{lemma:big-cluster2}, there exists a cluster set~$X \in \hat\F$ with~$N \cup \{v\} \subseteq X$.
    Hence,~$X = \hat C$ as otherwise the cost of~$\hat\F$ would be at least~$k+1$ as each vertex in~$N$ would appear in at least two sets.
    Thus,~$uv$ is covered by~$\hat{C} \cup \{u\}$ in the constructed overlapping clustering.

    Now let~$uv$ be any red edge incident to~$u$.
    Again, we claim that because~$u$ was unmarked, $v$ must have at least $k+2$ red neighbors in~$C_i$.
    Either~$v \in S$, in which case the argument is the same as before, or $v \in  V \setminus(S \cup C_i)$.
    In this case, all of~$C_i$ is contained in $v$'s red neighborhood, and we have already observed that~$C_i$ has at least $2k+2$ vertices.
    Let~$N$ be a set of~$k+1$ red neighbors of~$v$ in~$C_i$ that were marked.
    Note that~$v$ is contained in a set~$X \neq \hat{C} \in \hat\F$ as otherwise each vertex in~$N$ would be contained in at least two sets and~$\cost(\hat\F)\geq k+1$.
    Hence,~$uv$ is resolved as~$u\in \hat\C \cup \{u\}$ and~$v \in X$.

    We conclude that the resulting clustering covers all blue edges incident to~$u$ and resolves all red edges incident to~$u$ at the same cost as the clustering~$\hat \F$. By  repeating the procedure for the remaining vertices of~$R_i$ we conclude that there exists a clustering~$\F$ which clusters~$G$ and $\cost_{G}(\F) = \cost_{\hat G}(\hat \F)$. Repeating this argument for every cluster demonstrates that Rule~\ref{red:shrink-clusters} is indeed sound.

    Finally, note that after application of Rule~\ref{red:remove-clusters} and Rule~\ref{red:shrink-clusters} to a yes-instance, we have $p \leq 4k$ clusters of size at most $|S|2(k+1) \leq 6k^2+6k$ each and therefore the total number of vertices in the end is at most
    $$
      |S| + 4k(6k^2+6k) = 24k^3 + 24k^2 + 3k \in O(k^3)
    $$ or we return a trivial no-instance. This concludes the proof.
\end{proof}

\subsection{Constant-Factor Approximation}\label{sec:approximation}

We conclude this section with a constant-factor approximation for \CEPVSfull.
Again, this is in stark contrast to \CCVSfull.

\begin{theorem}
    \CEPVSfull{} admits a 7-approximation in polynomial time.
\end{theorem}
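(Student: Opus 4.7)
The plan is to exploit \cref{lemma:bad-star-lower} as a lower bound on $\opt$ and then produce an overlapping clustering whose cost is within a factor $7$ of the weight of a carefully chosen bad star forest. Compute (in polynomial time by greedy local improvements) a weight-maximal bad star forest $T$ in $G$, and set $w = \weight(T)$ and $S = V(T)$. Then $w \leq \opt$ by \cref{lemma:bad-star-lower}, and $|S| \leq 3w$ because a bad star of weight $\ell - 1$ has $\ell + 1 \leq 3(\ell - 1)$ vertices whenever $\ell \geq 2$.

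Since $T$ is maximal and a bad triangle is a bad star of weight $1$, the graph $G \setminus S$ contains no bad triangle, so its blue edges form vertex-disjoint cliques $C_1, \dots, C_p$. I would then construct the clustering $\F$ as follows: each $C_i$ becomes a base cluster $X_i$; for each $s \in S$, extend $X_i$ by $s$ whenever $s$ has a blue neighbor in $C_i$; for each bad star $S_j \in T$ use a single cluster $Z_j := V(S_j)$ to cover its internal blue edges at no additional cost for its leaves; cover any remaining blue edge inside $S$ by a pair cluster; and add a singleton $\{s\}$ for each $s \in S$ to resolve red edges incident to vertices that end up split. A direct check of \cref{def:overlapping-clustering} confirms that every blue edge is covered and every red edge is resolved: $X_i$ handles edges internal to $C_i$ and cross edges into $S$, $Z_j$ handles blue edges of $T$, and the singletons separate the split vertices from everything else.

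The cost is $\sum_{s\in S}(\#\F(s)-1)$, and the heart of the proof is to bound this by $7w$. The central tool is a \emph{swap argument}: if $s \in S$ appears in many clusters $X_i$, then choosing one blue neighbor of $s$ from each relevant clique yields a bad star $B_s$ (of weight one less than the number of such cliques) that is vertex-disjoint from $T \setminus \{s\}$. Weight-maximality of $T$ under local replacements then forces the number of cliques $s$ joins to be small: concretely, one obtains $a_s \le 2$ for leaves of $T$ (using that removing a leaf from a star of weight $q_j-1$ still leaves a bad star of weight $q_j-2$) and $a_s \le q_j$ for a center of a star with $q_j$ leaves. Summing gives $\sum_s a_s \le 3(w + t) \le 6w$ where $t \le w$ is the number of stars. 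A parallel swap argument bounds the number of blue edges inside $S$ that are not already in $T$ by $O(w)$, and the calibration of constants ultimately gives the bound $7w$.

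The main obstacle is the accounting of the two bounds together. Each $s \in S$ can contribute both from the $a_s$ side and from the ``extra'' blue edges inside $S$, and the swap witnesses for different vertices may compete for the same $V\setminus S$ vertices, so the per-vertex bounds cannot simply be added. I expect the cleanest resolution is to replace the independent swap arguments by a single amortized potential: assign each star $S_j \in T$ a budget of $7(q_j-1)$ units and verify that every cluster membership in the constructed $\F$ can be charged against exactly one such budget unit. Handling small stars (weight $1$, i.e., bad triangles) and star-internal blue edges will likely require separate arithmetic, and the exact constant $7$ drops out of the tightest such charging.
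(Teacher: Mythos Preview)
Your approach diverges substantially from the paper's and, as written, has a genuine gap rather than just missing bookkeeping.

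\textbf{The gap.} Your construction covers every blue edge inside $S$ that is not already a center--leaf edge of $T$ by its own pair cluster, and you claim that a ``parallel swap argument bounds the number of blue edges inside $S$ that are not already in $T$ by $O(w)$''. This is false. Take $w$ vertex-disjoint bad triangles $\{a_i,b_i,c_i\}$ with red edge $b_ic_i$, make every remaining pair blue, and let $V=S$ (so $V\setminus S=\emptyset$). The red edges form a perfect matching, so no bad star can have more than two leaves and the forest of $w$ triangles is weight-maximal. Yet there are $\binom{3w}{2}-3w=\Theta(w^2)$ blue edges inside $S$ outside $T$, and your construction produces a clustering of cost $\Theta(w^2)$, while $\opt=w$ (one cluster $V$ plus $w$ singletons $\{b_i\}$). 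No charging or potential argument can rescue this, because the constructed clustering really does have quadratic cost; the problem is the construction, not the accounting. Even ignoring the $b_s$ terms entirely, your own arithmetic gives $|S|+\sum_s a_s\le 3w+6w=9w$, not $7w$, so the promised ``calibration of constants'' cannot reach $7$ with this scheme.

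\textbf{What the paper does instead.} The paper never tries to bound an explicit construction against $w$ via swap arguments. After extracting $S$ from an \emph{inclusion}-maximal bad star forest (so $G\setminus S$ is a cluster graph with cliques $\mathcal C$), it restricts attention to \emph{simple} solutions: those containing a cluster $X_S\supseteq S$ together with singletons $\{s\}$ for every $s\in S$. Any solution can be made simple at additional cost at most $2|S|\le 6\opt$, so the optimal simple solution has cost $\opt'\le 7\opt$. The paper then shows, by a short exchange argument, that some optimal simple solution has one cluster $X_C\supseteq C$ per clique $C\in\mathcal C$, all distinct, with exactly one of them equal to $X_S$. After guessing which clique $C^*$ satisfies $X_{C^*}=X_S$, the only remaining decisions are, for each blue edge between $S$ and a clique $C\neq C^*$, whether to place its $S$-endpoint into $X_C$ or its $C$-endpoint into $X_S$; this is precisely a minimum vertex cover in a bipartite graph and is solved exactly via K{\H o}nig's theorem. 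The algorithm therefore outputs a simple solution of cost exactly $\opt'\le 7\opt$. The crucial idea you are missing is to compare to $\opt'$ rather than to $w$, and to compute $\opt'$ exactly rather than bound a fixed heuristic.
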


\begin{proof}
	Let $G=(V,E)$ be a correlation graph.
    We again begin by computing an inclusion-maximal bad star forest~$T$ in~$G$.
    By \cref{lemma:bad-star-lower}, the weight of~$T$ is at most~$\opt$ (the minimum cost of an overlapping clustering of~$G$).
    Since the number of vertices in a bad star is at most thrice its weight (a bad triangle has weight one and contains three vertices), the set~$S$ of vertices in~$T$ is at most~$3\opt$.
    Since~$T$ is inclusion-maximal, the graph induced by~$V \setminus S$ does not contain any bad star, that is, the blue edges form a cluster graph.
    Let~$\C$ be the set of (blue) cliques in this graph.
    If~$|\C| \leq 1$, then we find a simple 3-approximation by putting each vertex~$v\in S$ into its own cluster set~$X_v$ and adding one cluster set~$X_S = V$.
    Note that the cost of this overlapping clustering is~$|S| \leq 3\opt$.
    Hence, we assume for the remainder of the proof that~$|\C| \geq 2$.

    Next, we restrict our search to a solution that contains one cluster set~$X_S$ with~$S \subseteq X_S$ and for each vertex~$v \in S$ one cluster set~$X_v = \{v\}$.
    Note that we can add these sets to any solution (if they are not already present within the solution) to get a new solution whose cost is at most~$2|S| \leq 6\opt$ larger than the original.
    We call overlapping clusterings that satisfy the above \emph{simple solutions} and we denote the minimum cost of a simple solution by~$\opt'$.

    We next show that there is always a simple solution of cost~$\opt'$ that contains for each clique~$C \in \C$ a cluster set~$X_C$ with~$C \subseteq X_C$.
    Start with any simple solution~$\F$ of cost~$\opt'$ and any clique~$C \in \C$ and assume that~$\F$ does not contain a cluster set containing~$C$.
    We prove that in this case each vertex in~$C$ is contained in at least two cluster sets in~$\F$.
    Assume towards a contradiction that some vertex~$v\in C$ is contained in exactly one cluster set~$Y$ (note that by definition of overlapping clusterings, each vertex is contained in at least one cluster set).
    Since we assumed that no cluster set completely contains~$C$, there exists a vertex~$u \in C \setminus Y$.
    However, since~$u$ and~$v$ are contained in the same clique~$C$, the edge between them is blue and has to be covered by some cluster set~$Z \in \F$ (and hence~$Z$ has to contain both~$u$ and~$v$).
    Note that~$Z \neq Y$ since~$u \in Z$ but~$u \notin Y$.
    This contradicts the assumption that~$v$ is only contained in cluster set~$Y$.

    We construct a new simple solution~$\F'$ of cost~$\opt'$ by removing all vertices in~$C$ from all cluster sets in~$\F$ and adding them all to~$X_S$.
    In addition, we add one new cluster set~$X_C = C$.
    Note that the cost of~$\F'$ is at most the cost of~$\F$ as we removed each vertex in~$C$ from at least two cluster sets and added them to exactly two cluster sets.
    Moreover, the new solution is indeed an overlapping clustering as all blue edges incident to a vertex in~$C$ are covered by~$X_S$ as all blue neighbors are either in~$S$ or in~$C$.
    Since no red neighbors of any vertex in~$C$ are contained in~$C$, the new cluster set~$X_C$ ensures that all red edges incident to vertices in~$C$ are resolved.
    Repeating the above for all cliques in~$\C$ yields a simple solution of cost~$\opt'$ that contains for each clique~$C \in \C$ a cluster set~$X_C$ with~$C \subseteq X_C$.

    The next step is to show that there is always an optimal simple solution (a simple solution of cost~$\opt'$) in which~$X_C \neq X_{C'}$ for any pair~$C \neq C' \in \C$.
    Start with any optimal simple solution~$\F$ that contains a cluster set~$X_C \supseteq C$ for each clique~$C \in \C$ and assume that~$X_{C_1} = X_{C_2}$ for some cliques~$C_1 \neq C_2$.
    Observe that all vertices from at least one of the two cliques are contained in at least two cluster sets each as if there are vertices~$u\in C_1$ and~$v \in C_2$ that are only contained in~$X_{C_1} = X_{C_2}$, then the red edge between them is not resolved by~$\F$.
    Without loss of generality, let all vertices of~$C_1$ be contained in at least two cluster sets each.
    Then, we construct a new simple solution~$\F'$ of cost~$\opt'$ by removing all vertices in~$C_1$ from all cluster sets in~$\F$ and adding them all to~$X_S$ and adding one new cluster set~$X_{C_1} = C_1$.
    The proof that this is correct is exactly the same as before:
    The cost of~$\F'$ is at most the cost of~$\F$ as we removed each vertex in~$C_1$ from at least two cluster sets and added them to exactly two cluster sets.
    Moreover, the new solution is indeed an overlapping clustering as all blue edges incident to a vertex in~$C_1$ are covered by~$X_S$ and the new cluster set~$X_{C_1}$ ensures that all red edges incident to vertices in~$C$ are resolved.
    Repeating the above for all cliques in~$\C$ yields an optimal simple solution that contains for each clique~$C \in \C$ a cluster set~$X_C \supseteq C$ such that~$X_C \neq X_{C'}$ for all~$C \neq C' \in \C$.

    Next, we guess which clique~$C^* \in \C$ satisfies~$X_{C^*} = X_S$ in an optimal simple solution satisfying all of the above.\footnote{We can assume that one such clique always exists, but even if this was not the case, the following proof still works if the guess~$\{C^*\}=\emptyset$ yields an optimal simple solution.}
    By that, we mean that we try all possibilities of the following and return the best solution found.
    For each clique~$C \in \C \setminus \{C^*\}$, we compute a minimum vertex cover~$K_C$ of the blue edges between~$C$ and~$S$ in~$O(n^3)$~time using K{\H o}nig's theorem (note that the considered graph is bipartite by construction).
    We add each vertex in~$K_C \cap S$ to~$X_C$ and each vertex in~$K_C \cap C$ to~$X_S$.

    We claim that~$\sum_{C \in \C \setminus \{C^*\}} |K_C| \leq \opt' - |S|$.
    By the above arguments, we can start with an overlapping clustering~$\F$ consisting of one cluster set~$X_S = S \cup C^*$, one cluster set~$X_C = C$ for each~$C \in \C \setminus \{C^*\}$, and one cluster set~$X_v = \{v\}$ for each~$v \in S$.
    Note that all red edges are resolved and all blue edges except for those between~$S$ and~$V \setminus (S \cup C^*)$ are covered.
    To cover a blue edge~$uv$ with~$u \in S$ and~$v \in C$ for some~$C \in \C \setminus \{C^*\}$, there are three possibilities: We can add~$u$ to~$X_C$, we can add~$v$ to~$X_S$, or there exists a different cluster set~$Y \in F$ with~$\{u,v\} \subseteq Y$.
    Note that in the third case, we can remove~$v$ from~$Y$ and add it to~$X_S$ and still get an optimal simple solution.
    Moreover, the optimal way to cover all blue edges between~$S$ and~$V \setminus (S \cup C^*)$ using the first two possibilities corresponds exactly to~$\sum_{C \in \C \setminus \{C^*\}} |K_C|$.
	Now, all red edges are resolved, all blue edges are covered, and the cost of the constructed overlapping clustering is
	\[|S| + \sum_{C \in \C \setminus \{C^*\}} |K_C| \leq \opt' \leq 7\opt.\]
	Thus, we successfully computed a factor-7 approximation in polynomial time.
\end{proof}

We leave it as an open problem to improve the approximation factor.
We conjecture that a refined concept of a simple solution might yield an approximation factor of~$4$.

\section{Conclusion}
\label{sec:conclusion}

We have introduced permissive vertex splitting, which generalizes the earlier exclusive and inclusive vertex-splitting notions by allowing
symmetry with respect to ``positive'' and ``negative'' pairwise similarity data. Our type of vertex splitting turns out to be quite satisfying
as it corresponds to a natural definition of overlapping clustering. Unfortunately, the general case of
\CCVSfull{} is rather intractable, as it is \cclass{para-NP}-hard and admits no $n^{1-\varepsilon}$-approximation in polynomial time for any~$\varepsilon > 0$ (unless \cclass{P} = \cclass{NP}).
On the positive side, when restricted to data sets with complete data, we obtain a kernel with~$O(k^3)$~vertices and a polynomial-time 7-approximation. Interesting questions remain, for example whether one can reduce our approximation factor to $4$ (or even lower), whether a kernel with~$O(k)$~vertices exists (as is the case when only inclusive splits and edge-edits are allowed~\cite{abu-khzam2023cluster}), or whether refined
lower bounds in terms of running time or approximation factor can be found. Future work might also consider
the parameterized complexity of \CCVSfull{} and \CEPVSfull{} with respect to structural parameters of the input, or with restrictions on the
number of clusters which contain any given node.

Finally, we would like to amplify the call of Abu-Khzam et al.~\cite{abu-khzam2023cluster} to extend the study of vertex
splitting (exclusive, inclusive, or permissive) to other classes of target graphs, many of which (e.g., bicluster graphs, $s$-cliques, $s$-clubs, $s$-plexes, $k$-cores, and $\gamma$-quasi-cliques) have
been proposed as alternatives to cliques in clustering applications.

\bibliographystyle{abbrvnat}
\bibliography{refs}

\end{document}